\renewcommand{\vec}[1]{\boldsymbol{#1}}
\newtheorem{thm}{Theorem}
\newtheorem{lem}{Lemma}
\begin{document}

\title{A Double Auction for Charging Scheduling among Vehicles Using DAG-Blockchains}

\author{Jianxiong Guo,~\IEEEmembership{Member,~IEEE},
	Xingjian Ding,
	Weili Wu,~\IEEEmembership{Senior Member,~IEEE},
	and Ding-Zhu Du
	\thanks{Jianxiong Guo is with the Advanced Institute of Natural Sciences, Beijing Normal University, Zhuhai 519087, China, and also with the Guangdong Key Lab of AI and Multi-Modal Data Processing, BNU-HKBU United International College, Zhuhai 519087, China. (E-mail: jianxiongguo@bnu.edu.cn)
		
		Xingjian Ding is with the Faculty of Information Technology, Beijing University of Technology, Beijing 100124, China. (e-mail: dxj@bjut.edu.cn)
		
		Weili Wu and Ding-Zhu Du are with the Department of Computer Science, Erik Jonsson School of Engineering and Computer Science, The University of Texas at Dallas, Richardson, TX 75080, USA. (E-mail: weiliwu@utdallas.edu; dzdu@utdallas.edu)
		
		\textit{(Corresponding author: Xingjian Ding.)}
	}
	\thanks{Manuscript received xxxx; revised xxxx.}}

\markboth{Journal of \LaTeX\ Class Files,~Vol.~xx, No.~xx, June~2023}%
{Shell \MakeLowercase{\textit{et al.}}: Bare Demo of IEEEtran.cls for IEEE Journals}

\maketitle

\begin{abstract}
	Electric Vehicles (EVs) are becoming more and more popular in our daily life, which replaces traditional fuel vehicles to reduce carbon emissions and protect the environment. EVs need to be charged, but the number of charging piles in a Charging Station (CS) is limited and charging is usually more time-consuming than fueling. According to this scenario, we propose a secure and efficient charging scheduling system based on a Directed Acyclic Graph (DAG)-blockchain and double auction mechanism. In a smart area, it attempts to assign EVs to the available CSs in the light of their submitted charging requests and status information. First, we design a lightweight charging scheduling framework that integrates DAG-blockchain and modern cryptography technology to ensure security and scalability during performing scheduling and completing tradings. In this process, a constrained multi-item double auction problem is formulated because of the limited charging resources in a CS, which motivates EVs and CSs in this area to participate in the market based on their preferences and statuses. Due to this constraint, our problem is more complicated and harder to achieve truthfulness as well as system efficiency compared to the existing double auction model. To adapt to it, we propose two algorithms, namely Truthful Mechanism for Charging (TMC) and Efficient Mechanism for Charging (EMC), to determine an assignment between EVs and CSs and pricing strategies. Then, both theoretical analysis and numerical simulations show the correctness and effectiveness of our proposed algorithms.
\end{abstract}

\begin{IEEEkeywords}
	Electric Vehicle (EV), Charging Scheduling, DAG-based Blockchain, Constrained Multi-item Double Auction, Truthfulness, System Efficiency.
\end{IEEEkeywords}

\IEEEpeerreviewmaketitle

\section{Introduction}
\IEEEPARstart{T}{o} deal with the fossil energy crisis and reduce the emission of greenhouse gases, Electric Vehicles (EVs) have attracted more and more people's attention because of their great potential. Renewable energy will become the mainstream way of energy supply in the near future. Compared to traditional fuel vehicles, EVs have a number of advantages such as cost reduction, renewability, and environmental protection. With the development of EVs, a large number of Charging Stations (CSs) will be deployed in cities, which is different from current gas stations. Because of the tedious storage and transportation of gasoline, the deployment of gas stations is usually centralized. There is no such problem with electricity, especially after the rise of distributed energy. It can be seen that the deployment of CSs is more distributed, the number of CSs is larger, and the single CS is smaller. Besides, it usually takes more than half an hour to charge an EV, which is not the same thing as being able to refuel in an instant. This has also created a degree of administrative hardship.

In this paper, we consider such a scenario: In a smart area, there is a manager that is responsible for the  overall scheduling of EVs and CSs in this area. EVs hope to complete charging at the fastest speed and at the least cost, while CSs hope to maximize their profits by providing charging services. However, there are still some challenges that need to be resolved.

On the one hand, it lacks an effective approach to ensure the security of charging trading between EVs and CSs. Traditional centralized trading platforms depend on a trusted third party to manage and store every transaction between EVs and CSs. These platforms are plagued by some attacks \cite{xiong2021multi} \cite{wang2023incentive} such as single point of failure, denial of service, and privacy leakage \cite{zhang2019security}. A lot of existing research about charging management neglects the security and privacy protection of trading. The advent of blockchain technology has made it possible to improve these issues. Blockchain is a kind of decentralized ledger, which can realize security and privacy protection through the knowledge of modern cryptography and distributed consensus protocol. Based on that, we propose a Blockchain-based Charging Scheduling (BCS) system to manage the charging assignments between EVs and CSs in a secure and efficient manner. 

First, we are able to protect the contents of communication among the entities in this system from being tampered with or leaked by using digital encryption techniques. However, due to its confirmation delay, limited scalability, and the inherent uncertainty of consensus mechanisms \cite{guo2020blockchain} \cite{guo2022architecture}, traditional chain-based blockchains are not applicable to transactions with high-frequency characteristics. In addition, the incentive to miners, transaction fees, and restricted block sizes are also important factors that limit the use of such blockchains. To overcome the above drawbacks, the infrastructure of our BCS system is built on a Directed Acyclic Graph (DAG)-based blockchain \cite{popov2016tangle}. It has an asynchronous consensus process, which can make full use of network bandwidth to improve scalability. The DAG-based blockchain has no miners, thus it reduces latency and transaction fees, which makes frequent energy transactions between EVs and CSs possible \cite{popov2016tangle}. At the same time, it can guarantee the same security and decentralization as the traditional blockchain. In our BCS system, EVs work as light nodes, while CSs work as full nodes which are responsible for issuing new transactions, storing, and maintaining the whole blockchain system.

On the other hand, we have mentioned that charging is more time-consuming than fueling. For each CS in this system, the number of charging piles is limited. We can imagine that an EV goes to a CS for charging, but there is no idle charging pile at this moment that can charge it. This EV has to wait for other EVs to finish charging or go to other CSs, which is frustrating. For each EV in this system, it is more inclined to those CSs that locate near it or provide better services to it. Because EVs and CSs in this system belong to different entities, they are driven by their own utilities. Then, a natural question is how to assign the EV that submits a charging request in this system to a CS with a limited number of charging piles. Based on that, we formulate a constrained multi-item double auction model, where EVs are buyers and CSs are sellers. 

This model can be considered as an instance of a single-round multi-item double auction process \cite{yang2011truthful}. For the multi-item double auction, Jin \textit{et al. }\cite{jin2015auction1} \cite{jin2015auction2} proposed a resource allocation problem in mobile cloud computing and designed a truthful resource auction mechanism for the resource trading between mobile users (buyers) and cloudlet (sellers). However, this is a one-to-one mapping from buyers to sellers, that is, at most one mobile user can be assigned to one cloudlet at a moment. This scheme cannot be applied to solve our problem, because the number of EVs that can be assigned to a CS is more than one but less than the number of available charging piles in this CS. In other words, this is a many-to-one mapping from buyers to sellers. Secondly, to maximize the utilities of CSs (sellers), the assignment and price determination depends not only on the unit bids of EVs, but also on their charging amounts. Due to the limited number of charging piles in each CS as well as the different preferences and charging amounts of each EV, our constrained multi-item double auction model is distinguished from any existing double auction mechanisms. It is more complex and harder to achieve truthfulness and system efficiency. In order to solve this challenge, we design a Truthful Mechanism for Charging (TMC) first to determine the winners and transaction prices. By theoretical analysis, it meets the requirements of individual rationality, budget balance, truthfulness, and computational efficiency. Because of achieving truthfulness, TMC sacrifices a part of system efficiency. To improve the system efficiency, we design an Efficient Mechanism for Charging (EMC) that increases the number of successful trades (winning buyers) significantly more than that in TMC. Nevertheless, EMC is not able to ensure truthfulness for the buyers in extreme cases. The DAG-based blockchain and built-in smart contract implemented by the double auction enable our BCS system to work distributed and securely together.

The contributions of this paper are summarized as follows:
\begin{itemize}
    \item We investigate the challenges of EVs charging at present, integrate DAG-blockchain with charging scheduling, and proposed a complete framework for  charging scheduling and digital trading.
    \item We model the assignment between EVs and CSs as a constrained multi-item double auction. This model is first proposed by us, and its objective and constraint are fundamentally different from existing auctions. To solve it, we design a TMC algorithm that can guarantee truthfulness.
    \item In order to meet the rapid response of scheduling systems, we design an EMC algorithm that is more efficient than TMC at the expense of truthfulness.
    \item We build a simulation environment for the BCS system, and verify our expected goals for the system and auction theory through intensive simulations.
\end{itemize}


\textbf{Organizations: }In Sec. \uppercase\expandafter{\romannumeral2}, we discuss the-state-of-art work. In Sec. \uppercase\expandafter{\romannumeral3}, we introduce the background of DAG-based blockchain. In Sec. \uppercase\expandafter{\romannumeral4}, we present our BCS system and charging scheduling framework elaborately. In Sec. \uppercase\expandafter{\romannumeral5}, constrained multi-item double auction is formulated. Then, two solving mechanisms TMC and EMC are shown in Sec.\uppercase\expandafter{\romannumeral6} and \uppercase\expandafter{\romannumeral7}. Finally, we evaluate our proposed algorithms by numerical simulations in Sec.\uppercase\expandafter{\romannumeral8} and show the conclusions in Sec. \uppercase\expandafter{\romannumeral9}.

\section{Related Work}
Recently, blockchain has been used as an effective method to deal with the issues of transactions generated by peer-to-peer (P2P) energy trading among EVs. Kang \textit{et al. }\cite{kang2017enabling} exploited a P2P electricity trading system with consortium blockchain to motivate EVs to discharge for balancing local electricity demand. Wu \textit{et al. }\cite{wu2017two} studied energy scheduling in office buildings through combining renewable energies and workplace EV charging, and used stochastic programming to manage the uncertainty of charging. Liu \textit{et al. }\cite{liu2018adaptive} proposed an EV participation charging scheme for a blockchain-enable system to minimize the fluctuation level of the grid network and charging cost of EVs. Su \textit{et al. }\cite{su2018secure} designed a contract-based energy blockchain in order to make EVs charge securely in the smart community, while they implemented a reputation-based delegated Byzantine fault tolerance consensus algorithm. Zhou \textit{et al. }\cite{zhou2019secure} developed a secure and efficient energy trading mechanism based on consortium blockchain for vehicle-to-grid that exploited the bidirectional transfer technology of EVs to reduce the demand-supply mismatch. Xia \textit{et al. }\cite{xia2020bayesian} proposed a vehicle-to-vehicle electricity scheme in blockchain-enabled Internet of Vehicles to address the driving endurance issue of EVs. Guo \textit{et al. }\cite{guo2021reliable} designed a lightweight blockchain-based information trading framework to realize a real-time traffic monitoring.

However, due to its confirmation delay, limited scalability, and lack of computational power, traditional chain-based blockchains cannot be deployed in the systems associated with EVs. The DAG-based blockchain (Tangle) \cite{popov2016tangle} emerged as a new type of blockchain, which reduces the reliance on computational power and improves the throughput significantly. Huang \textit{et al. }\cite{huang2019towards} presented a DAG-based blockchain with a credit-based consensus mechanism for power-constrained IoT devices. Hassija \textit{et al. }\cite{hassija2020blockchain} exploited the DAG-based blockchain to support the increasing number of transactions in the vehicle-to-grid network. In this paper, our BCS system is built on DAG-based blockchain as well because of its limited computational power and high-frequency characteristics.

Auction theory has been widely applied in many different fields, such as mobile crowdsensing \cite{yang2015incentive} \cite{lin2021multi} \cite{zhang2023incentive}, mobile cloud computing \cite{jin2015auction1} \cite{jin2015auction2} \cite{ding2020incentive}, and energy trading \cite{zhao2022comparisons} \cite{an2023distributed}. Here, we only focus on double auctions, where buyers (resp. sellers) submit their bids (resp. asks) to an auctioneer. For example, Borjigin \textit{et al. }\cite{borjigin2018broker} proposed a double auction method to maximize the profit of participants that can be applied in service function chain routing and NFV price adjustment. There are two classic models: McAfee double auction \cite{mcafee1992dominant} and Vickrey-based model \cite{parkes2001achieving}. They only considered homogeneous items and the Vickrey-based model cannot satisfy the truthfulness, unfortunately. To heterogeneous items (multi-item), Yang \textit{et al. }\cite{yang2011truthful} designed a truthful double auction scheme for the cooperative communications, where the auctioneer first uses an assignment algorithm based on its design goal to get candidates and mapping from buyer to seller, then apply McAfee double auction to determine the winners and corresponding transaction prices. 

According to \cite{yang2011truthful}, Jin \textit{et al. }\cite{jin2015auction1} \cite{jin2015auction2} proposed a truthful resource auction mechanism in mobile cloud computing, which is the most relevant to our paper and has been introduced in Sec. \uppercase\expandafter{\romannumeral1}. However, there are obvious differences in our work. In our constrained multi-item double auction model, the number of buyers that can be assigned to the same seller is constrained. Furthermore, we should not only consider the unit bids of buyers, but also consider their charging amounts. It is more difficult to achieve truthfulness and system efficiency, which is our main contribution to this paper.

\section{Background for DAG-Blockchains}
Blockchain is an emerging technology that acts as a decentralized ledger or database since Nakamoto published his original prototype in 2008 \cite{nakamoto2008bitcoin}, which is implemented by modern cryptographic technologies and distributed consensus algorithms. Because of these technical characteristics, it takes more than half of the computational power to tamper with transactions in the blockchain, which prevents attacks from malicious nodes effectively. In the process of communication and storage, All transactions are encrypted digitally, which realizes the anonymity and privacy protection of the blockchain system. They can enable users who do not trust each other to trade freely in a secure and reliable environment without a third-party platform. Then, a number of real applications flourished, such as Bitcoin \cite{nakamoto2008bitcoin}, Ethereum \cite{wood2014ethereum}, and Hyperledger fabric \cite{androulaki2018hyperledger}, all of which were based on the chain-based blockchain.

\begin{figure}[!t]
	\centering
	\subfigure[The chain-based blockchain]{
		\includegraphics[width=2.5in]{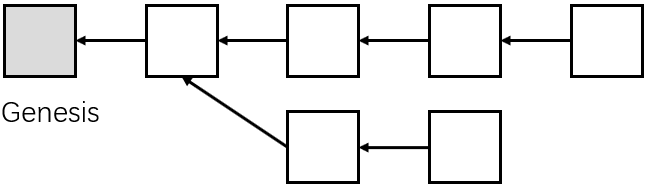}
	}%
	
	\subfigure[The DAG-based blockchain]{
		\includegraphics[width=2.5in]{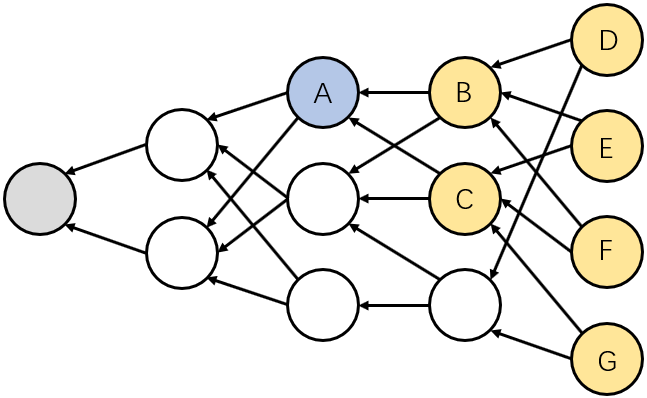}
	}%
	\centering
	\caption{The comparison of chain-based and DAG-based blockchian, where the square nodes in (a) are blocks and the circle nodes in (b) are transactions.}
	\label{fig1}
\end{figure}

The chain-based blockchain is shown in Fig. \ref{fig1} (a), where a block contains a certain number of transactions. All users maintain the longest main chain jointly, and only the transactions in the main chain can be considered legal. However, this chain-based structure is limited by its high requirement for computational power because of its consensus mechanism based on hashing puzzles, thereby it cannot be applied to devices with limited power. Secondly, it generates and verifies the blocks in a sequential and synchronous manner, resulting in unacceptable confirmation delay and scalability. Take Bitcoin as an example, its average throughput is about 7 \textit{TPS} (\textit{Transactions Per Second}) \cite{kogias2016enhancing}. Thus, it is not suitable for those real-time application scenarios with high-frequency characteristics. Here, the system we design is based on EVs which are both resource-limited and high-frequency, hence the chain-based structure is difficult to achieve our goal.

To improve the scalability, a new blockchain architecture based on the Directed Acyclic Graph (DAG) was proposed, called DAG-based blockchain or tangle \cite{popov2016tangle}. The DAG-based blockchain is shown in Fig. \ref{fig1} (b), where each node represents a transaction instead of a block. When a new transaction is issued, it must validate two tips which are previous transactions attached in the tange but not verified by any others. This validation is denoted by a directed edge in the tangle. This new transaction will be verified by the other upcoming transactions. The time required to verify a tip (confirmation delay) depends on the tip selection algorithms \cite{popov2016tangle} and the rate of new transactions.

Generally, there is a weight associated with each transaction, which is proportional to the difficulty of the hashing puzzle defined by itself. When issues a new transaction, it has to find a random number \textit{nouce} such that
\begin{equation}
\text{Hash}(\text{Transaction, Timestamp, nouce})\leq \text{Target}
\end{equation}
where the smaller the target implies the greater the weight. Until now, the newly issued transaction has been completed and waits to be verified by subsequent transactions. When can we consider a transaction valid? This is related to its cumulative weight. The cumulative weight of a transaction is the weighted sum of transactions that approve it directly or indirectly. Shown as Fig.\ref{fig1}(b), the cumulative weight of transaction $A$ equals the weighted sum of transactions $A$, $B$, $C$, $D$, $E$, $F$, and $G$. Consider a transaction, the larger cumulative weight can only be achieved by consuming more computational power, thereby it is more likely to be legal if it has a larger cumulative weight. A transaction is believed to be legal if and only if its cumulative weight exceeds a pre-defined threshold. Suppose that most power is controlled by legal users, we can distinguish those transactions issued by malicious users since the valid transactions will be verified by other legal users and their cumulative weight will be larger and larger.

Different from those synchronous consensus mechanisms in the chain-based blockchain, the consensus process in the DAG-based blockchain is completed in an asynchronous approach. This eliminates an inherent defect of the chain-based blockchain that has on consensus finality because of forking and pruning \cite{hassija2020distributed}. Besides, it is able to defend against possible attacks, such as a single point of failure, Sybil attack, lazy tips, and double-spending. According to that, the DAG-based blockchain can not only provide us with a secure and reliable trading environment but also improve the scalability and reduce the requirements on the computational power of devices. Then, a lot of real systems based on DAG-based blockchains emerged, such as IOTA \cite{popov2016tangle}, Byteball \cite{churyumov2016byteball}, and Nano \cite{lemahieu2018nano}. Finally, the devices in our charging scheduling system are resource-limited and trade with others frequently. Therefore, the DAG-based blockchain is an ideal choice to act as infrastructure to achieve security and privacy protection.

\section{Blockchain-based Charging Scheduling System}
In this section, we demonstrate the overview of our Blockchain-based Charging Scheduling (BCS) system by introducing entities and a charging scheduling framework.
\subsection{Entities for BCS System}
Consider a smart area, there are a certain number of CSs with charging piles available to charge the EVs in this area. Then, it exists a manager that is responsible for managing entities and executing the charging scheduling between EVs and CSs. Therefore, there are three main entities in this system shown as follows.

\begin{itemize}
	\item \textbf{Electric Vehicle (EV):} The EVs running in this area play the part of requesters. When it is low on power, the EV will request for charging services from the manager.
	\item \textbf{Charging Station (CS):} The CSs located in this area play the part of providers to charge the EVs. A CS has many charging piles, and each charging pile can charge one EV at a time. When it is idle, the CS will inform the manager of its status information.
	\item \textbf{Manager:} The manager works as an energy scheduler. Each EV sends a request about its charging preference and each CS submits its status information on how many charging piles are available to the manager. Then, the manager acts as an auctioneer to perform a constrained double auction mechanism between EVs and CSs that assigns EVs to CSs and determines the transaction prices. The price charged to EV and payment rewarded to CS are determined by the transaction prices.
\end{itemize}

Thus, the BCS system can be denoted by $\mathbb{B}=\{M,\mathbb{V},\mathbb{C}\}$ where $\mathbb{V}=\{V_1,V_2,\cdots\}$ is the set of EVs, $\mathbb{C}=\{C_1,C_2,\cdots\}$ is the set of CSs, and $M$ is manager. Then, we define a transaction $T_{ij}$ between $V_i\in\mathbb{V}$ and $C_j\in\mathbb{C}$ as the charging trading and digital payment record between them. This transaction is stored in the blockchain and includes pseudonyms of $V_i$ and $C_j$, data type, transaction details, and timestamps of generation. In order to ensure security and privacy protection, the transaction is encrypted by their digital signatures and the payment is made in the form of charging coins.

\begin{figure}[!t]
	\centering
	\includegraphics[width=2.5in]{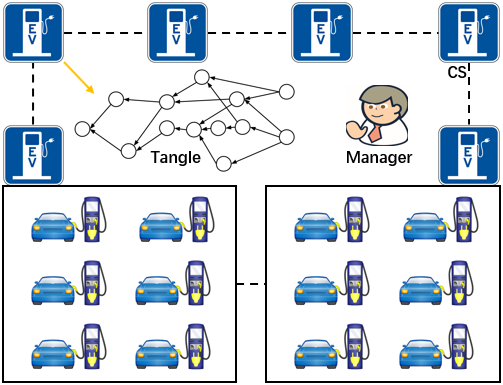}
	\centering
	\caption{The architecture of blockchain-based charging scheduling system in a smart area.}
	\label{fig2}
\end{figure}

\subsection{Architecture for BCS system}
The infrastructure of BCS system $\mathbb{B}=\{M,\mathbb{V},\mathbb{C}\}$ is established on the DAG-based blockchain, where each aforementioned entity is a node in this system. Depending on their abilities towards computation and storage, they can be split into two categories: light node and full node. Light nodes have limited computational power and memory space, thereby they are only responsible for generating transactions together with full nodes together. Only part of their own information is stored so that they can check it conveniently. Full nodes usually possess powerful servers with multiple functions, which can issue new transactions by finding a valid nonce and validating the previous tips. Moreover, they are responsible for storing and maintaining the entire blockchain, namely the tangle.

The architecture of our BCS system is shown in Fig. \ref{fig2}. In our BCS system $\mathbb{B}$, each EV $V_i\in \mathbb{V}$ works as a light node, and each CS $C_j\in\mathbb{C}$ works as a full node. The manager $M$ is a specific full node that manages the entire system to make sure it works instead of storing and maintaining the tangle. For the full nodes, the difficulty of hashing puzzle can be set by modifying different target values dynamically to adapt to their own computational power. For the manager, in addition to the above-mentioned function that carries out constrained multi-item double auction between EVs and CSs as an auctioneer, it has the right to add or delete nodes in real-time according to the actual situation. For example, it can permit new CSs to join this system and remove some malicious EVs from this system. Besides, it can block those invalid requests from the nodes within the system and prevent various attacks from the nodes outside the system in advance. Also, the architecture of our BCS system is based on the DAG-based blockchain, which is not only distributed and reliable to defend against attacks but also improves the throughput to be qualified for high-frequency energy trading.

In our BCS system, we use an asymmetric cryptography, such as the elliptic curve digital signature algorithm \cite{johnson2001elliptic}, for system initialization. Each EV $V_i\in\mathbb{V}$ registers on a trusted authority to be a legitimate node through obtaining a unique identification $ID_i$ that is associated with its license plate and a certificate $Cer_i$ that is signed by the private key of authority to certify the authenticity of this identity. After verifying its certificate, the EV $V_i$ can join this system and be assigned with a public/private key pair $(PK_i,SK_i)$ where its public key works as a pseudonym that is open to all nodes and its private key that is kept by itself. In asymmetric encryption, the message encrypted by the public key can be decrypted by a corresponding private key, and vice versa. After joining this system, there is a set of wallet address $\{W_{i(k)}\}_{k=1}^\theta$ owned by the EV $V_i$, where we assume there are $\theta$ wallets for each entity. Thus, the account of each EV $V_i\in\mathbb{V}$ that joins this system can be denoted by $AE_i=\{ID_i,Cer_i,(PK_i,SK_i),\{W_{i(k)}\}_{k=1}^\theta\}$. Similarly, the account of each CS $C_j\in\mathbb{C}$ can be denoted by $CE_j=\{ID_j,Cer_j,(PK_j,SK_j),\{W_{j(k)}\}_{k=1}^\theta\}$.

\begin{figure}[!t]
	\centering
	\includegraphics[width=2.5in]{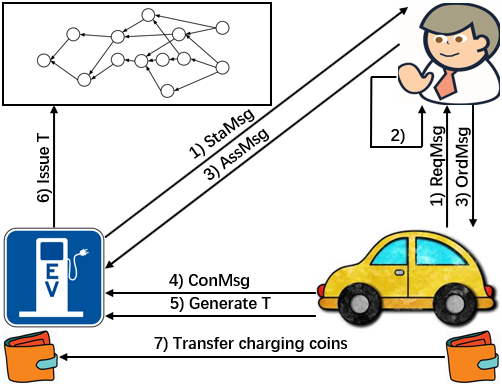}
	\centering
	\caption{The procedure and message flow of charging scheduling between CSs and EVs in a smart area.}
	\label{fig3}
\end{figure}

\subsection{Charging Scheduling Framework}
The detailed procedure and message flow of charging scheduling between EVs and CSs are shown in Fig. \ref{fig3}. We assume it performs in discrete times. Here, we only consider the situation in the time slot $t\in\{1,2,\cdots\}$, thus we ignore the time superscript $t$ in the following variables. By adding this superscript $t$, we can design online scheduling further according to the actual demand. The specific operations can be divided into 7 steps in detail:

\textbf{1) Request and Status: }Each EV $V_i\in\mathbb{V}$ sends a request message $R_i$ that includes its bid for charging at each CS $C_j\in\mathbb{C}$ to the manager. This request message is denoted by $ReqMsg=\{PK_M(SK_i(R_i)),Cer_i,STime\}$ where $PK_M$ is the public key of the manager and $STime$ is the timestamp of this message generation. At the same time, each CS $C_j\in\mathbb{C}$ sends a status message $S_j$ that includes its ask for serving a vehicle and the number of available charging piles to the manager. This status message is denoted by $StaMsg=\{PK_M(SK_j(S_j)),Cer_j,STime\}$. Here, the request and status message are encrypted by the manager's public key $PK_M$ since they are only allowed to be read by the manager for privacy protection and fair trading.

\textbf{2) Scheduling: }The manager waits to receive request messages from EVs and status messages from CSs. After collecting them, the manager confirms their legitimate identity by verifying their certificates. Then, it works as a scheduler to assign each winning EV to a CS that has unoccupied charging piles. Besides, it determines the price charged to EV and payment rewarded to CS as well, which is executed by the built-in smart contract. The smart contract is implemented by the constrained multi-item double auction model explained in the following Sec.\uppercase\expandafter{\romannumeral5} and \uppercase\expandafter{\romannumeral6}.

\textbf{3) Order and Assignment: }The manager sends an order message $O_i$ to each winning EV $V_i\in\mathbb{V}$ and an assignment message $A_j$ to each winning CS $C_j\in\mathbb{C}$. The $O_i$ includes a CS that can serve to it and the price charged to it, which is denoted by $OrdMsg=\{PK_i(SK_M(O_i)),Cer_M,STime\}$. The $A_j$ includes an assignment, the set of EVs that can be charged in $C_j$ and the payment rewarded to it, which is denoted by $AssMsg=\{PK_j(SK_M(A_j)),Cer_M,STime\}$. Here, the order and assignment message are encrypted by their public key $PK_i$ and $PK_j$ because of permitting to be read by themselves.

\textbf{4) Confirm: }If EV $V_i$ receives an order message from the manager, it implied it can be charged at the CS $C_x$ designated by the manager. Then, the EV $V_i$ sends a confirm message $F_i$ like ``I will come on time'' to the designated CS $C_x$. It is denoted by $ConMsg=\{PK_x(SK_i{F_i}),Cer_i,STime\}$ encrypted by $C_x$'s public key $PK_x$ for similar reasons.

\textbf{5) Charging: }Once CS $C_x$ receives the confirm message from the EV $V_i$, it will check whether the $V_i$ is in its assignment $A_x$. If yes, the CS $C_x$ can provide charging service to EV $V_i$ before the deadline. After charging, the EV $V_i$ generates a new transaction $T_{ix}$ according to their trading information, signs, and sends the $SK_i(T_{ix})$ back to $C_x$.

\textbf{6) Transactions: }When CS $C_x$ receives the new transaction from the EV $V_i$, it will check and sign this transaction by its private key as well. Now, the CS $C_x$ issues this new transaction $SK_x(SK_i(TX_{ix}))$ in the DAG-based blockchain. It is able to adjust the difficulty of the hashing puzzle by setting different targets dynamically according to its computational power and transaction frequency.

\textbf{7) Verification and payment: }After the transaction $T_{ix}$ is issued, it will be verified to be legal in the future when its cumulative weight is large enough. At this moment, charging coins should be transferred from the wallet of $V_i$ to $C_x$. The coins with the price charged to $V_i$ will be deducted from the wallet $W_{i(k)}$ and coins with the payment rewarded to $C_x$ will be added into the wallet $W_{x(k)}$ permanently.

\section{Problem Formulation}
In our BCS system $\mathbb{B}=\{M,\mathbb{V},\mathbb{C}\}$, CSs in $\mathbb{C}$ provide charging piles for EVs in $\mathbb{V}$ that need charging. Each CS $C_i\in\mathbb{C}$ has limited charging piles, where the number of charging piles in this charging station is $k_i\in\mathbb{Z}_+$. In general, CSs are distributed evenly across this smart area, also those located in the area center are usually more crowded. Furthermore, CSs have different charging efficiencies, where higher efficiency means shorter charging time. Thus, there are two critical attributes, location and efficiency, associated with each CS, which determine the valuation of an EV toward it. The valuation of an EV toward a CS can be decided according to its requirement. For example, when the battery of an EV is very low, it values high a CS that is nearest to it. But for an EV in a hurry, it considers both location and efficiency to minimize its charging time. In the trading between EVs and CSs, we aim to incentivize CSs to provide charging services and meet the demands of EVs. To benefit both EVs and CSs, we design a constrained multi-item double auction model that gets a truthful assignment between EVs and CSs.

\subsection{Constrained Multi-Item Double Auction Model}
Shown as Fig. \ref{fig3}, we assume that this system runs in discrete times. At each time step, EVs send request messages and CSs send status messages privately to the managers. Based on the single-round multi-item double auction model \cite{yang2011truthful}, EVs are buyers and CSs are sellers in this auction. The manager $M\in\mathbb{B}$ works as the trusted third auctioneer to assign $n$ buyers to $m$ sellers and determine the price charged to each buyer and payment rewarded to each seller.

The set of buyers is $\mathbb{V}=\{V_1,V_2,\cdots,V_n\}$ and the set of sellers is $\mathbb{C}=\{C_1,C_2,\cdots,C_m\}$. For each buyer $V_i\in\mathbb{V}$, its bid vector is denoted by $\vec{B}_i=(b_i^1,b_i^2,\cdots,b_i^m)$ where $b_i^j$ is the unit bid (maximum buying price per unit charging) of $V_i$ for charging at seller $C_j\in\mathbb{C}$. Additionally, we define a charging vector $\vec{R}=(r_1,r_2,\cdots,r_n)$ where $r_i$ is the charging amount of $V_i$. For the sellers in $\mathbb{C}$, the ask vector is denoted by $\vec{A}=(a_1,a_2,\cdots,a_m)$ where $a_j\in\vec{A}$ is the unit ask (minimum selling price per unit charging) of $C_j$. As for the number of available charging piles in each CS, we define a vector $\vec{K}=(k_1,k_2,\cdots,k_m)$ where $k_j\in\mathbb{Z}_+$ is the number of piles that can charge EVs in $C_j\in\mathbb{C}$. Here, we notice that the bids of a buyer vary with sellers since each EV has different evaluations on CSs according to its requirements regarding location and efficiency. However, the ask of a seller remains unchanged among buyers because it is only concerned about payment from charging vehicles.

By aforementioned definitions, the request message sent by buyer $V_i$ is denoted by $R_i=(\vec{B}_i,r_i)$ and the status message sent by seller $C_j$ is denoted by $S_j=(a_j,k_j)$. After it gets the collection $(\vec{B},\vec{R},\vec{A},\vec{K})$ where $\vec{B}=(\vec{B}_1;\vec{B}_2;\cdots;\vec{B}_n)$, the auctioneer determines the winning buyer set $\mathbb{V}_w\subseteq\mathbb{V}$, the winning seller set $\mathbb{C}_w\subseteq\mathbb{C}$, a mapping from $\mathbb{V}_w$ to $\mathbb{C}_w$ that is $\sigma:\{i:V_i\in\mathbb{V}_w\}\rightarrow\{j:C_j\in\mathbb{C}_w\}$, the unit price $\hat{p}_i$ charged to buyer $V_i\in\mathbb{V}_w$, and the unit payment $\bar{p}_j$ rewarded to seller $C_j$. The assignment $A_j$ for each $C_j\in\mathbb{C}_w$ is
\begin{equation}
	A_j=\left\{V_i\in\mathbb{V}_w:\sigma(i)=j\right\}\text{ where }|A_j|\leq k_j
\end{equation}
because the CS $C_j$ permits at most $k_j$ EVs to be charged at the same time, which is the reason why this model is called ``constrained'' double auction. Moreover, for each buyer $V_i\in\mathbb{V}$, its valuation vector is denoted by $\vec{V}_i=(v_i^1,v_i^2,\cdots,v_i^m)$ where $v_i^j$ is its unit valuation of $V_i$ for charging at seller $C_j\in\mathbb{C}$. For the sellers in $\mathbb{C}$, the cost vector is denoted by $\vec{C}=(c_1,c_2,\cdots,c_m)$ where $c_j\in\vec{C}$ is the unit cost of $C_j$ to provide charging service. Based on the buyer's valuation and seller's cost, the utility $\hat{u}_i$ of winning buyer $V_i\in\mathbb{V}_w$ and the utility $\bar{u}_j$ of winning seller $C_j\in\mathbb{C}_w$ can be defined as follows:
\begin{flalign}
&\hat{u}_i=(v_i^{\sigma(i)}-\hat{p}_i)\cdot r_i\\
&\bar{u}_j=(\bar{p}_j-c_j)\cdot\sum\nolimits_{V_i\in A_j}r_i
\end{flalign}
Otherwise, for losing buyer $V_i\notin\mathbb{V}_w$ and losing seller $C_j\notin\mathbb{V}_w$, their utilities are $\hat{u}_i=0$ and $\bar{u}_j=0$. Here, the utility $\hat{u}_i$ is proportional to the difference between its valuation and charged price, which implies the satisfaction level of $V_i$ on its assigned CS. The utility $\bar{u}_j$ is proportional to the difference between rewarded payment and its cost, which characterizes the profitability of $C_j$ for providing charging service.

\subsection{Design Rationales}
The constrained multi-item double auction model defined in the last subsection can be denoted by $\Psi=(\mathbb{V},\mathbb{C},\vec{B},\vec{R},\vec{A},\vec{K})$. A valid double auction mechanism has to meet the following three properties first, they are

\begin{itemize}
	\item \textbf{Individual Rationality: }The price charged to the winning buyer is not more than its bid and the payment rewarded to the winning seller is not less than its ask. Consider our model $\Psi$, we have $\hat{p}_i\leq b_i^{\sigma(i)}$ for each $V_i\in\mathbb{V}_w$ and $\bar{p}_j\geq a_j$ for each $C_j\in\mathbb{C}_w$.
	\item \textbf{Budget Balance: }The total price charged to all winning buyers is not less than the total payment rewarded to all winning sellers, which ensures the profitability of the auctioneer. Thus,  
	\begin{equation}
	\sum\nolimits_{V_i\in\mathbb{V}_w}\hat{p}_i\cdot r_i-\sum\nolimits_{C_j\in\mathbb{C}_w}\bar{p}_j\cdot\sum\nolimits_{V_i\in A_j}r_i\geq 0
	\end{equation}
	Besides, in our model $\Psi$, we need to guarantee that the assignment for each winning seller is not more than its number of charging piles, that is $|A_j|\leq k_j$ for $C_j\in\mathbb{C}_w$.
	\item \textbf{Computational Efficiency: }The auction results, including winning buyers, winning sellers, mapping from winning, price charged to buyer, and payment rewarded to seller, can be obtained in polynomial time.
\end{itemize}

In addition to the above three properties, there are two more important properties that should be satisfied strictly or approximately.
\begin{itemize}
	\item \textbf{Truthfulness: }A double auction is truthful if every buyer (resp. seller) bids (resp. asks) truthfully is one of its dominant strategies that can maximize its utility. That is to say, no buyer can increase its utility by giving a bid that is different from its true valuation and no seller can increase its utility by giving an ask that is different from its true cost. Consider our model $\Psi$, we have $\hat{u}_i$ can be maximized by bidding $\vec{B}_i=\vec{V}_i$ for each $V_i\in\mathbb{V}$ and $\bar{u}_j$ can be maximized by asking $a_j=c_j$ for each $C_j\in\mathbb{C}$ when other players do not change their strategies.
	\item \textbf{System Efficiency: }There are a number of different metrics to evaluate the system efficiency of a double auction model. The most common approach \cite{parkes2001achieving} is the number of completed trades, that is the number of buyers in the winning buyer set $\mathbb{V}_w$ in our model $\Psi$. Here, each buyer $V_i\in\mathbb{V}_w$ will be assigned to a seller $C_j\in\mathbb{C}_w$, which satisfies the requirements of both buyer and seller. To maximize it, this is in line with our original intention of designing this system to make as many EVs as possible to be charged. Other metrics, such as total price charged to winning buyers, total payment rewarded to winning seller, and profit of auctioneer, should be considered as well based on needs. We will analyze them later.
\end{itemize}

To the truthfulness, we assume the submitted vector $\vec{R}$ and $\vec{K}$ are trusted and cannot be tampered with because they can be monitored by reliable hardware. Thereby, we only consider the bids $\vec{B}$ and asks $\vec{A}$ to analyze the truthfulness of model $\Psi$. When it is truthful, the double auction model avoids being manipulated maliciously due to the fact that each player can get the best utility by telling the truth. There is no player that has the motivation to lie since they do not have to adapt to others' strategies by telling the lie for improving their utilities. Therefore, truthfulness simplifies the strategic decisions for players and makes sure a fair market environment, which plays an important role in mechanism design.

\section{Truthful Mechanism for Charging}
In this section, we design a Truthful Mechanism for Charging (TMC) based on our constrained multi-item double auction model and analyze whether it satisfies the desired properties mentioned in Sec. \uppercase\expandafter{\romannumeral5}-B.

\begin{algorithm}[!t]
	\caption{\text{TMC $(\mathbb{V},\mathbb{C},\vec{B},\vec{R},\vec{A},\vec{K})$}}\label{a1}
	\begin{algorithmic}[1]
		\renewcommand{\algorithmicrequire}{\textbf{Input:}}
		\renewcommand{\algorithmicensure}{\textbf{Output:}}
		\REQUIRE $\mathbb{V},\mathbb{C},\vec{B},\vec{R},\vec{A},\vec{K}$
		\ENSURE $\mathbb{V}_w,\mathbb{C}_w,\sigma,\hat{\mathbb{P}}_w,\bar{\mathbb{P}}_w$
		\STATE $(\mathbb{V}_c,\mathbb{C}_c,a_{j_\psi})\leftarrow$ TMC-WCD $(\mathbb{V},\mathbb{C},\vec{B},\vec{A})$
		\STATE $(\mathbb{V}_w,\mathbb{C}_w,\sigma,\hat{\mathbb{P}}_w,\bar{\mathbb{P}}_w)\leftarrow$ TMC-AP $(\mathbb{V}_c,\mathbb{C}_c,a_{j_\psi},\vec{R},\vec{K})$
		\RETURN $(\mathbb{V}_w,\mathbb{C}_w,\sigma,\hat{\mathbb{P}}_w,\bar{\mathbb{P}}_w)$ 
	\end{algorithmic}
\end{algorithm}

\begin{algorithm}[!t]
	\caption{\text{TMC-WCD $(\mathbb{V},\mathbb{C},\vec{B},\vec{A})$}}\label{a2}
	\begin{algorithmic}[1]
		\renewcommand{\algorithmicrequire}{\textbf{Input:}}
		\renewcommand{\algorithmicensure}{\textbf{Output:}}
		\REQUIRE $\mathbb{V},\mathbb{C},\vec{B},\vec{A}$
		\ENSURE $\mathbb{V}_c,\mathbb{C}_c,a_{j_\phi}$
		\STATE $\mathbb{V}_c\leftarrow\emptyset,\mathbb{C}_c\leftarrow\emptyset$
		\STATE Construct a set $\mathbb{V}'=\{V_{st}:b_s^t>0,V_s\in\mathbb{V}\}$ based on $\vec{B}$
		\STATE Sort the buyers in $\mathbb{V}'$ and transfer it to an order list $\mathbb{V}'=\langle V_{s_1t_1},V_{s_2t_2},\cdots,V_{s_xt_x}\rangle$ such that $b_{s_1}^{t_1}\geq b_{s_2}^{t_2}\geq\cdots\geq b_{s_x}^{t_x}$
		\STATE Sort the sellers, get an order list $\mathbb{C}'=\langle C_{j_1},C_{j_2},\cdots,C_{j_m}\rangle$ such that $a_{j_1}\leq a_{j_2}\leq\cdots\leq a_{j_m}$
		\STATE Find the median ask $a_{j_\phi}$ of $\mathbb{C}'$, $\phi=\left\lceil\frac{m+1}{2}\right\rceil$
		\STATE Find the minimum $\varphi$ from $\mathbb{V}'$ such that $b_{s_{\varphi+1}}^{t_{\varphi+1}}<a_{j_\phi}$
		\STATE $\mathbb{V}''\leftarrow\langle V_{s_1t_1},V_{s_2t_2},\cdots,V_{s_\varphi t_\varphi}\rangle$
		\FOR {each $V_{st}\in\mathbb{V}''$}
		\IF {$a_t<a_{j_\phi}$}
		\STATE $\mathbb{V}_c\leftarrow\mathbb{V}_c\cup\{V_{st}\}$
		\IF {$C_t\notin\mathbb{C}_c$}
		\STATE $\mathbb{C}_c\leftarrow\mathbb{C}_c\cup\{C_{t}\}$
		\ENDIF
		\ENDIF
		\ENDFOR
		\RETURN $(\mathbb{V}_c,\mathbb{C}_c,a_{j_\phi})$
	\end{algorithmic}
\end{algorithm}

\begin{algorithm}[!t]
	\caption{\text{TMC-AP $(\mathbb{V}_c,\mathbb{C}_c,a_{j_\phi},\vec{R},\vec{K})$}}\label{a3}
	\begin{algorithmic}[1]
		\renewcommand{\algorithmicrequire}{\textbf{Input:}}
		\renewcommand{\algorithmicensure}{\textbf{Output:}}
		\REQUIRE $\mathbb{V}_c,\mathbb{C}_c,a_{j_\phi},\vec{R},\vec{K}$
		\ENSURE $\mathbb{V}_w,\mathbb{C}_w,\sigma,\hat{\mathbb{P}}_w,\bar{\mathbb{P}}_w$
		\STATE $\mathbb{V}_w\leftarrow\emptyset,\mathbb{C}_w\leftarrow\emptyset,\hat{\mathbb{P}}_w\leftarrow\emptyset,\bar{\mathbb{P}}_w\leftarrow\emptyset$
		\STATE Sort the buyers in $\mathbb{V}_c$, get an ordered queue $\mathbb{Q}_c=\langle V_{s_1t_1},V_{s_2t_2},\cdots,V_{s_yt_y}\rangle$ such that $b_{s_1}^{t_1}\cdot r_{s_1}\geq b_{s_2}^{t_2}\cdot r_{s_2}\geq\cdots\geq b_{s_y}^{t_y}\cdot r_{s_y}$
		\STATE Create a tentative set $\mathbb{H}_j\leftarrow\emptyset$ from each $C_j\in\mathbb{C}_c$
		\STATE $\vec{K}'=(k'_1,k'_2,\cdots,k'_m)$ copied from vector $\vec{K}$
		\WHILE {$\mathbb{Q}_c\neq\emptyset$}
		\STATE $V_{s_lt_l}\leftarrow\mathbb{Q}_c$.pop(0) // \textit{Obtain the first element in the queue $\mathbb{Q}_c$, then remove it from $\mathbb{Q}_c$.}
		\IF {$k'_{t_l}>0$}
		\STATE $\mathbb{H}_{t_l}\leftarrow\mathbb{H}_{t_l}\cup\{V_{s_l}\},k'_{t_l}\leftarrow k'_{t_l}-1$
		\STATE $\hat{p}_{s_lt_l}\leftarrow a_{j_\phi}$
		\ELSE
		\FOR {each $V_s\in\mathbb{H}_{t_l}$}
		\STATE $\hat{p}_{st_l}\leftarrow\max\{a_{j_\phi},b_{s_l}^{t_l}\cdot(r_{s_l}/r_s)\}$
		\ENDFOR
		\FOR {each $V_{st_l}\in\mathbb{Q}_c$}
		\STATE $\mathbb{Q}_c\leftarrow\mathbb{Q}_c\backslash\{V_{st_l}\}$
		\ENDFOR
		\ENDIF
		\ENDWHILE
		\STATE $\mathbb{V}_w\leftarrow\{V_s:\exists_j(C_j\in\mathbb{C}_c\land V_s\in\mathbb{H}_j)\}$
		\FOR {each $V_s\in\mathbb{V}_w$}
		\STATE $\mathbb{I}_s\leftarrow\{C_t:V_s\in\mathbb{H}_t,C_t\in\mathbb{C}_c\}$
		\STATE Find $C_{t_m}\in\arg\max_{C_t\in\mathbb{I}_s}\{\hat{u}_{st}=(v_s^{t}-\hat{p}_{st})\cdot r_s\}$
		\STATE $\sigma(s)=t_m$
		\STATE $\hat{p}_s\leftarrow\hat{p}_{st_m},\hat{\mathbb{P}}_w\leftarrow\hat{\mathbb{P}}_w\cup\{\hat{p}_s\}$
		\IF {$C_{t_m}\notin\mathbb{C}_w$}
		\STATE $\mathbb{C}_w\leftarrow\mathbb{C}_w\cup\{C_{t_m}\}$
		\ENDIF
		\ENDFOR	
		\FOR {each $C_t\in\mathbb{C}_w$}
		\STATE $\bar{p}_{t}\leftarrow a_{j_\phi},\bar{\mathbb{P}}_w\leftarrow\bar{\mathbb{P}}_w\cup\{\bar{p}_{t}\}$ // \textit{Payments}
		\ENDFOR
		\RETURN $(\mathbb{V}_w,\mathbb{C}_w,\sigma,\hat{\mathbb{P}}_w,\bar{\mathbb{P}}_w)$
	\end{algorithmic}
\end{algorithm}

\subsection{Algorithm Design}
The process of TMC is shown in Algorithm \ref{a1}, where it is composed of two sub-processes, Winning Candidate Determination (TMC-WCD) shown in Algorithm \ref{a2} and Assignment \& Pricing (TMC-AP) shown in Algorithm \ref{a3}. In TMC, we select the winning candidates, then assign winning buyer candidates to winning seller candidates truthfully. At the same time, the price charged to winning buyer and payment rewarded to winning seller can be determined.

Shown as Algorithm \ref{a2}, in the process of winning candidate determination, we sort the buyers in descending order based on their bids for different sellers, where each buyer $V_s\in\mathbb{V}$ is replaced with a buyer set $\{V_{st}:b_s^t>0,C_t\in\mathbb{C}\}$ in which buyer $V_s$ gives a positive bid to seller $C_t$. The sellers are sorted in ascending order based on their ask. The median of asks from sellers $a_{j_\phi}$ is selected as a threshold \cite{jin2015auction1} to control the number of buyer and seller candidates. Let $\varphi$ satisfy $b_{s_{\varphi}}^{t_{\varphi}}\geq a_{j_\phi}$ and $b_{s_{\varphi+1}}^{t_{\varphi+1}}<a_{j_\phi}$, buyer $V_{st}$ will be a winning buyer candidate if its bid $b_s^t$ is not less than $b_{s_{\varphi}}^{t_{\varphi}}$ and the ask of requested seller $a_t$ is less than the threshold $a_{j_\phi}$. Seller $C_t$ will be a winning seller candidate if its ask $a_t$ is less than $a_{j_\phi}$ and there exists at least one winning buyer candidate bidding for it. 

Shown as Algorithm \ref{a3}, in the process of assignment \& pricing, we sort the $\mathbb{V}_c$ in descending order based on their total bids, which is the unit bids multiplied by their charging amounts. The total bid of buyer $V_{st}$ is $b_s^t\cdot r_s$ definitely. For each winning buyer candidate $V_{st}\in\mathbb{V}_c$, it has met the basic conditions for closing a deal because there is a winning seller candidate $C_t\in\mathbb{C}_c$ with $b_s^t>a_t$ and $a_t<a_{j_\phi}$. For each seller $C_t\in\mathbb{C}_c$, we assign the buyers with larger total bids to it in priority. Thereby, there is a ``tentative set'' $\mathbb{H}_t$ associated with each $C_t\in\mathbb{C}_c$, which contains at most $k_t$ buyers with maximum total bids to $C_t$ in $\mathbb{Q}_c$. It can be implemented in line 5-18. We denoted by $\hat{p}_{st}$ the unit price charged to buyer $V_s$ that gets service from $C_t$. Similar, the utility $\hat{u}_{st}$ for each buyer $V_s\in\mathbb{H}_t$ can be defined as $\hat{u}_{st}=(v_s^t-\hat{p}_{st})\cdot r_s$; Otherwise, the utility is $\hat{u}_{st}=0$ for the buyer $C_s\notin\mathbb{H}_t$. For example, let $\langle V_{o_1t},\cdots,V_{o_zt}\rangle\subseteq\mathbb{Q}_c$ be all buyers in $\mathbb{Q}_c$ that bid for $C_t$ with $b_{o_1}^{t}\cdot r_{o_1}\geq\cdots\geq b_{o_z}^{t}\cdot r_{o_z}$. If $k_t\geq z$, we have $\mathbb{H}_t=\langle V_{o_1t},\cdots,V_{o_zt}\rangle$ and $\hat{p}_{o_it}=a_{j_\phi}$ for each $V_{o_it}\in\mathbb{H}_t$; Else, we have $\mathbb{H}_t=\langle V_{o_1t},\cdots,V_{o_{k_t}t}\rangle$ and $\hat{p}_{o_it}=\max\{a_{j_\phi},b_{o_{k_t+1}}^{t}\cdot(r_{o_{k_t+1}}/r_{o_i})\}$ for each $V_{o_it}\in\mathbb{H}_t$ to guarantee truthfulness. Then, for each winning buyer $V_s\in\mathbb{V}_w$, it can be assigned to one of the seller in $\mathbb{I}_s$. The $V_s$ selects the optimal seller $C_{t_m}\in\mathbb{I}_s$ such that $\hat{u}_{st_m}\geq\hat{u}_{st}$ for each $C_t\in\mathbb{I}_s$. Now, the mapping is $\sigma(s)=t_m$ and the charged price is $\hat{p}_s=\hat{p}_{st_m}$. Finally, the payment rewarded to winning seller in $\mathbb{C}_w$ is given by $a_{j_\phi}$ unanimously.

\subsection{Properties of TMC}
Next, we argue that our proposed TMC mechanism satisfies individual rationality, budget balance, computational efficiency, and truthfulness.
\begin{lem}
	The TMC is individually rational.
\end{lem}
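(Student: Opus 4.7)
The plan is to unpack the definition of individual rationality into its two parts (buyer side and seller side) and verify each by tracing through the pricing rules in Algorithms~2 and~3. Both parts rely fundamentally on properties inherited from the winning-candidate step TMC-WCD, so I would first record the two facts I will need repeatedly: (i) every $V_{st}\in\mathbb{V}_c$ satisfies $b_s^{t}\geq a_{j_\phi}$, because $\mathbb{V}_c\subseteq\mathbb{V}''$ and $\mathbb{V}''$ collects only the top-$\varphi$ buyers in the sorted list with $b_{s_\varphi}^{t_\varphi}\geq a_{j_\phi}$, and (ii) every $C_t\in\mathbb{C}_c$ satisfies $a_t<a_{j_\phi}$, directly from the filter on line~9 of Algorithm~2.

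For the seller side, each winning seller $C_t\in\mathbb{C}_w\subseteq\mathbb{C}_c$ receives unit payment $\bar{p}_t=a_{j_\phi}$ by construction (line~31 of Algorithm~3). By fact~(ii) above, $\bar{p}_t=a_{j_\phi}>a_t$, so $\bar{p}_t\geq a_t$, which is exactly what individual rationality demands of the seller. This part is essentially immediate once the invariants are laid out.

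The buyer side is the substantive part. Fix $V_s\in\mathbb{V}_w$ and let $\sigma(s)=t_m$, so $V_s\in\mathbb{H}_{t_m}$ and $\hat{p}_s=\hat{p}_{st_m}$. I will split into two cases according to whether the capacity $k_{t_m}$ was ever exhausted while processing the queue $\mathbb{Q}_c$. If it was not exhausted, then $V_s$ was added on line~8 and its price was never updated, so $\hat{p}_{st_m}=a_{j_\phi}$; combined with fact~(i), $\hat{p}_{st_m}=a_{j_\phi}\leq b_s^{t_m}$. If the capacity was exhausted, the first overflowing buyer $V_{s_l t_m}$ triggered the \textbf{else} branch and reset the price of every $V_s\in\mathbb{H}_{t_m}$ to $\hat{p}_{st_m}=\max\{a_{j_\phi},\,b_{s_l}^{t_m}\cdot(r_{s_l}/r_s)\}$. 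Because $\mathbb{Q}_c$ is sorted in decreasing order of total bid $b\cdot r$ and $V_s$ entered $\mathbb{H}_{t_m}$ before $V_{s_l t_m}$, we have $b_s^{t_m}\cdot r_s\geq b_{s_l}^{t_m}\cdot r_{s_l}$, i.e.\ $b_s^{t_m}\geq b_{s_l}^{t_m}\cdot(r_{s_l}/r_s)$; together with fact~(i), this gives $b_s^{t_m}\geq \max\{a_{j_\phi},\,b_{s_l}^{t_m}\cdot(r_{s_l}/r_s)\}=\hat{p}_{st_m}$. In either case, $\hat{p}_s=\hat{p}_{st_m}\leq b_s^{t_m}=b_s^{\sigma(s)}$, as required.

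The main thing to be careful about, and the only nontrivial step, is the inequality $b_s^{t_m}\cdot r_s\geq b_{s_l}^{t_m}\cdot r_{s_l}$ in the second case: it uses the precise sort key of $\mathbb{Q}_c$ (total bid, not unit bid) and the fact that $V_s$ is popped strictly earlier than the overflowing buyer $V_{s_l t_m}$. Once that sorting invariant is stated explicitly, the rest of the argument is a direct comparison between $b_s^{t_m}$ and the two quantities inside the $\max$. After these two cases, collating the buyer and seller parts yields both defining inequalities of individual rationality for TMC.
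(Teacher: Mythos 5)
Your proof is correct and follows essentially the same route as the paper's: the seller side via $\bar{p}_t=a_{j_\phi}>a_t$ from the candidate filter, and the buyer side via the two pricing cases, using the threshold $b_s^t\geq a_{j_\phi}$ for candidates and the total-bid ordering of $\mathbb{Q}_c$ to bound the critical price $b_{s_l}^{t_m}\cdot(r_{s_l}/r_s)$. Your explicit statement of the two invariants from TMC-WCD and the combined treatment of the $\max$ is a slightly cleaner write-up of the same argument.
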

\begin{proof}
	For each winning buyer $V_s\in\mathbb{V}_w$ and winning seller $C_t\in\mathbb{C}_w$, we need to show that the charged price $\hat{p}_s\leq b_s^{\sigma(s)}$ and the rewarded payment $\bar{p}_t\geq a_t$. According to Algorithm \ref{a2}, it must be $a_t< a_{j_\phi}$ if $C_t\in\mathbb{C}_w\subseteq\mathbb{C}_c$. Thereby the payment rewarded to winning seller $\bar{p}_t=a_{j_\phi}$ is larger than its ask $a_t$ for each $C_t\in\mathbb{C}_w$. Consider a winning buyer $V_s\in\mathbb{V}_w$, the charged price is either $\hat{p}_s=a_{j_\phi}$ or $\hat{p}_s=b_{s_l}^{\sigma(s)}\cdot(r_{s_l}/r_s)$.
	\begin{itemize}
		\item In the first case, it must be $b_s^{\sigma(s)}\geq b_{p_{\varphi}}^{q_{\varphi}}$ if $V_s\in\mathbb{V}_w$. The price charged to winning buyer $\hat{p}_s=a_{j_\phi}$ is not more than its bid $b_s^{\sigma(s)}$ definitely.
		\item In the second case, we have $b_s^{\sigma(s)}\cdot r_s\geq b_{s_l}^{t_l}\cdot r_{s_l}$ according to Algorithm \ref{a3}. The price charged to winning buyer $\hat{p}_s=b_{s_l}^{\sigma(s)}\cdot(r_{s_l}/r_s)$ is not more than its bid $b_s^{\sigma(s)}$.
	\end{itemize}
	Thus, the TMC is individually rational because all winning buyers and sellers are individually rational.
\end{proof}

\begin{lem}
	The TMC is budget balanced.
\end{lem}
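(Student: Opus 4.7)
The plan is to unpack the budget-balance definition into its two parts: the revenue inequality $\sum_{V_i\in\mathbb{V}_w}\hat{p}_i\cdot r_i\geq\sum_{C_j\in\mathbb{C}_w}\bar{p}_j\sum_{V_i\in A_j}r_i$, and the capacity constraint $|A_j|\leq k_j$ for every $C_j\in\mathbb{C}_w$. Both will be read off directly from Algorithms~\ref{a2} and \ref{a3}, with the median ask $a_{j_\phi}$ acting as the pivot that ties the two sides together.

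For the revenue inequality, I would first observe that every winning seller receives the uniform payment $\bar{p}_t=a_{j_\phi}$, set in the final for-loop of Algorithm~\ref{a3}. Next I would argue that every winning buyer is charged at least $a_{j_\phi}$: in Algorithm~\ref{a3} the charged price is assigned as $\hat{p}_{s_lt_l}\leftarrow a_{j_\phi}$ on line~9, and otherwise as $\hat{p}_{st_l}\leftarrow\max\{a_{j_\phi},\,b_{s_l}^{t_l}(r_{s_l}/r_s)\}$ on line~12, which is $\geq a_{j_\phi}$ by construction. Since the winning buyers are partitioned by $\sigma$ into the sets $\{A_j\}_{C_j\in\mathbb{C}_w}$, I can then chain
\begin{equation}
\sum_{V_i\in\mathbb{V}_w}\hat{p}_i\cdot r_i\;\geq\;a_{j_\phi}\sum_{V_i\in\mathbb{V}_w}r_i\;=\;\sum_{C_j\in\mathbb{C}_w}a_{j_\phi}\sum_{V_i\in A_j}r_i\;=\;\sum_{C_j\in\mathbb{C}_w}\bar{p}_j\sum_{V_i\in A_j}r_i,
\end{equation}
which is exactly the required inequality.

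For the capacity constraint, I would track the auxiliary counters $k'_{t}$ introduced on line~4 of Algorithm~\ref{a3}. A buyer is appended to the tentative set $\mathbb{H}_{t_l}$ only inside the branch guarded by $k'_{t_l}>0$ (line~7), and each such insertion immediately decrements $k'_{t_l}$ by one. Therefore $|\mathbb{H}_t|\leq k_t$ throughout the loop, and since $A_t\subseteq\mathbb{H}_t$ for every winning seller $C_t$, the bound $|A_t|\leq k_t$ follows.

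I do not expect any real obstacle here; the argument is essentially bookkeeping. The only subtlety worth flagging explicitly is that the pricing rule on line~12 can exceed $a_{j_\phi}$ (via the tie-breaking Vickrey-style term $b_{s_l}^{t_l}(r_{s_l}/r_s)$), but this only strengthens the revenue side of the inequality, so the $\max$ with $a_{j_\phi}$ is enough for budget balance and I do not need to analyze the second term.
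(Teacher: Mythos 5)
Your proposal is correct and follows essentially the same route as the paper's proof: it rests on the observation that every charged price is at least the median ask $a_{j_\phi}$ while every rewarded payment equals $a_{j_\phi}$, combined with the fact that $\sigma$ partitions $\mathbb{V}_w$ into the sets $A_j$. Your treatment of the capacity constraint via the counters $k'_t$ is merely a more explicit version of what the paper dismisses as ``easy to see,'' so there is nothing further to add.
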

\begin{proof}
	Shown as Algorithm \ref{a3}, each winninng buyer $V_s\in\mathbb{V}_w$ is assigned to exact one winning seller $C_t\in\mathbb{C}_w$, hence this is a many-to-one mapping from $\mathbb{V}_w$ to $\mathbb{C}_w$ and $\mathbb{V}_w=\cup_{C_j\in\mathbb{C}_w}A_j$. For each mapping $\sigma(s)=t$ from winning buyer $V_s$ to winning seller $C_t$, we have $\hat{p}_s\geq a_{j_\phi}=\bar{p}_t$. Based on (5), it can be shown that
	\begin{equation}
		\sum\nolimits_{V_i\in\mathbb{V}_w}(\hat{p}_i-\bar{p}_{\sigma(i)})\cdot r_i\geq 0
	\end{equation}
	Besides, it is easy to see that the assignment $A_j$ for each winning seller $C_j\in\mathbb{C}_w$ satisfies $|A_j|\leq k_j$.
\end{proof}

\begin{lem}
	The TMC is truthful.
\end{lem}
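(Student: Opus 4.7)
The plan is to verify the dominant-strategy condition separately on the buyer side and the seller side, fixing all other agents' reports in each case. The central observation is that TMC is essentially a critical-bid rule: the median threshold $a_{j_\phi}$ computed in TMC-WCD depends only on the ask vector $\vec{A}$, and the tentative price $\hat{p}_{st}$ that TMC-AP assigns to buyer $V_s$ at seller $C_t$ is exactly the critical unit bid needed for $V_s$ to sit inside $\mathbb{H}_t$. Once this is made precise, truthfulness follows in the usual Myerson style: bidding $v_s^t$ places $V_s$ into $\mathbb{H}_t$ exactly when $\hat{u}_{st}\geq 0$, and the realized utility $\hat{u}_s=\max_{C_t\in\mathbb{I}_s}\hat{u}_{st}$ is then the pointwise maximum over all profitable options.

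First I would handle buyer truthfulness. Fix $V_s$ and any deviation $\vec{B}_s\neq\vec{V}_s$. Because $a_{j_\phi}$ is independent of every buyer's bid, and because the processing order of pairs $(s',t')$ with $s'\neq s$ or $t'\neq t$ inside $\mathbb{Q}_c$ is unaffected by changing $b_s^t$, I can analyze each coordinate $b_s^t$ in isolation. I would then split on whether the critical threshold for being admitted to $\mathbb{H}_t$ is $a_{j_\phi}$ (the station $C_t$ never fills up) or $b_{s_l}^{t}\cdot r_{s_l}/r_s$ (the station $C_t$ fills up with pivot buyer $V_{s_l}$). In either case I argue that $b_s^t>v_s^t$ can only add $C_t$ into $\mathbb{I}_s$ at a tentative utility $\hat{u}_{st}\leq 0$, and $b_s^t<v_s^t$ can only remove $C_t$ from $\mathbb{I}_s$ at a tentative utility $\hat{u}_{st}\geq 0$. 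Since $\hat{u}_s$ is the max over $\mathbb{I}_s$ and zero otherwise, neither direction of deviation can strictly improve it.

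Next I would handle seller truthfulness. Fix $C_t$ with true cost $c_t$ and let $a_{j_\phi}$ denote the median under the truthful profile. The payment to every winning seller is uniformly $a_{j_\phi}$, and both the winning-seller condition $a_t<a_{j_\phi}$ and the assignment $A_t$, which in TMC-AP depends only on $\vec{B},\vec{R},\vec{K}$ and the set of winning-seller candidates, are insensitive to $a_t$ as long as $a_t$ stays on the same side of the median. The structural step I would verify is that the median is monotone in the natural direction: perturbing a single ask from above to below the median can only weakly decrease the new median, and vice versa. From this, a winning seller who deviates either keeps the same payment and the same assignment, or is pushed out of the winning set with utility $0\leq(a_{j_\phi}-c_t)\sum_{V_i\in A_t}r_i$; and a losing seller who deviates downward into the winning set faces a new median $a'_{j_\phi}\leq a_{j_\phi}\leq c_t$, giving nonpositive utility.

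The main obstacle I anticipate is controlling how the pivot buyer $V_{s_l}$ that sets prices inside a given $\mathbb{H}_t$ can itself move when some buyer misreports. If $V_s$ shifts $b_s^t$ enough to enter or leave $\mathbb{H}_t$, the identity of the pivot for the remaining members of $\mathbb{H}_t$ can change, which in principle reshapes $\mathbb{I}_{s'}$ for other buyers $s'\neq s$. I would close this by arguing that such ripple effects cannot propagate back into $V_s$'s own $\mathbb{I}_s$, so $V_s$'s utility still depends solely on critical bids that are independent of $b_s^t$; this isolation is what licenses the coordinate-wise reduction in the buyer-side step.
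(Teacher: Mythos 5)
Your proposal is correct and follows essentially the same route as the paper's proof: a coordinate-wise critical-price argument for buyers (the charged price, $a_{j_\phi}$ or $b_{s_l}^{t}\cdot r_{s_l}/r_s$, is independent of $V_s$'s own bid, so overbidding can only add sellers to $\mathbb{I}_s$ at nonpositive utility and underbidding can only remove sellers offering nonnegative utility), combined with a median-threshold argument for sellers (uniform payment $a_{j_\phi}$, with the median moving weakly against any profitable deviation). The only cosmetic difference is that you package the paper's exhaustive case analysis as a Myerson-style monotonicity-plus-critical-payment argument and explicitly justify the coordinate-wise isolation of $b_s^t$, which the paper leaves implicit.
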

\begin{proof}
	Here, we need to show the truthfulness to sellers and buyers one by one as follows:
	
	For each buyer $V_s\in\mathbb{V}$, we need to show its utility $\hat{u}_s$ when giving a truthful bid $\vec{B}_s=\vec{V}_s$ is not less than its corresponding utility $\hat{u}'_s$ when given an untruthful bid $\vec{B}'_s\neq\vec{V}_s$. Afterward, any notation $x_s$ and $x'_s$ refer to the concepts given by bid $\vec{B}_s$ and $\vec{B}'_s$ respectively.
	
	\noindent
	\textbf{(a)} The $V_s$ is a winning buyer when bidding truthfully: According to Algorithm \ref{a3}, it wins the seller $C_{t_m}$ such that maximizes its utility. Thereby, the utility $\hat{u}_{st_m}\geq\hat{u}_{st}$ for each $C_t\in\mathbb{I}_s$. First, for each seller $C_t\in\mathbb{I}_s$, it implies $V_s\in\mathbb{H}_t$ and giving an untruthful bid $(b_s^t)'$ to seller $C_t$ cannot increase the utility such that $\hat{u}'_{st}>\hat{u}_{st}$ since
	\begin{itemize}
		\item $(b_s^t)'>v_s^t$: The charged price $\hat{p}'_{st}(=\hat{p}_{st})$ will not be changed because the $V_s$ has been in $\mathbb{H}_t$ when bidding $b_s^t(=v_s^t)$. Thus, we have $\hat{u}'_{st}=\hat{u}_{st}$.
		\item $(b_s^t)'<v_s^t$: The charged price $\hat{p}'_{st}(=\hat{p}_{st})$ will not be changed if the $V_s$ is still in $\mathbb{H}_t$ when bidding $(b_s^t)'(<v_s^t)$. Thus, we have $\hat{u}'_{st}=\hat{u}_{st}$. However, when this untruthful bid $(b_s^t)'$ decreases to be lower than  $\hat{p}_{st}$, the $V_s$ cannot be in $\mathbb{H}_t$ and $\hat{u}'_{st}=0$. Thus, we have $\hat{u}'_{st}<\hat{u}_{st}$.
	\end{itemize}
	Then, for each seller $C_t\in\mathbb{V}\backslash\mathbb{I}_s$, it can be divided into two sub-cases where giving an untruthful bid $(b_s^t)'$ to seller $C_t$ cannot increase the utility such that $\hat{u}'_{st}>\hat{u}_{st}$ as well. They are analyzed as follows:
	\begin{itemize}
		\item $V_{st}\notin\mathbb{V}_c$: If $a_t\geq a_{j_\phi}$, it is impossible to make $V_{st}$ be in $\mathbb{V}_c$ according to Algorithm \ref{a2} regardless of what the $(b_s^t)'$ is. Thus, we have $\hat{u}'_{st}=\hat{u}_{st}=0$. If $a_t<a_{j_\phi}$ but truthful bid $b_s^t(=v_s^t)<a_{j_\phi}$, the $V_s$ has to increase its bid such that $(b_s^t)'\geq a_{j_\phi}$ in order to make $V_{st}$ be in $\mathbb{V}_c$. At this time, we have
		\begin{equation}
			\hat{u}'_{st}=(v_s^t-\hat{p}'_{st})\cdot r_s\leq (v_s^t-a_{j_\phi})\cdot r_s<0
		\end{equation}
		if $C_t\in\mathbb{I}_s$ when bidding $(b_s^t)'$ untruthfully, otherwise $\hat{u}'_{st}=0$. Thus, we have $\hat{u}'_{st}<\hat{u}_{st}=0$.
		\item $V_{st}\in\mathbb{V}_c$ but $V_{s}\notin\mathbb{H}_t$: In this case, we can know that the $\mathbb{H}_t$ has been full where $|\mathbb{H}_t|=k_t$. From this, we have $b_o^t\cdot r_o\geq b_s^t(=v_s^t)\cdot r_s$ where $V_o$ has minimum value of $b_o^t\cdot r_o$ among all buyers in $\mathbb{H}_t$. The $V_s$ has to increase its bid such that $(b_s^t)'\geq b_o^t\cdot(r_o/r_s)$ in order to replace $V_o$ in $\mathbb{H}_t$. Then, the charged price will be changed to $\hat{p}'_{st}=b_o^t\cdot(r_o/r_s)$. The utility is 
		\begin{equation}
			\hat{u}'_{st}=(v_s^t-\hat{p}'_{st})\cdot r_s=\left(v_s^t-\frac{b_o^tr_o}{r_s}\right)\cdot r_s<0
		\end{equation}
		since $b_o^t\cdot r_o\geq v_s^t\cdot r_s$. Thus we have $\hat{u}'_{st}<\hat{u}_{st}=0$.
	\end{itemize}
	\textbf{(b)} The $V_s$ is not a winning buyer when bidding truthfully: According to Algorithm \ref{a3}, there is no such a $\mathbb{H}_t$ for $C_t\in\mathbb{C}_c$ that has $V_s\in\mathbb{H}_t$. For each seller $C_t\in\mathbb{C}$, we have utility $\hat{u}_{st}=0$. Similarly, we need to show giving an untruthful bid $(b_s^t)'$ to seller $C_t$ cannot increase the utility such that $\hat{u}'_{st}>\hat{u}_{st}$. The analysis about it can be divided into $V_{st}\notin\mathbb{V}_c$ and $V_{st}\in\mathbb{V}_c$ but $V_{s}\notin\mathbb{H}_t$, which are the same as the analysis for $C_t\in\mathbb{V}\backslash\mathbb{I}_s$ in part (a). Thus, we have $\hat{u}'_{st}\leq\hat{u}_{st}=0$.
	
	\noindent
	From above, the utility of $V_s$ to each seller $C_t\in\mathbb{C}$ satisfies $\hat{u}_{st}\geq\hat{u}'_{st}$ definitely. By selecting the the seller such that maximizes its utility, we have $\hat{u}_s\geq\hat{u}'_s$. Therefore, the buyers are truthful.
	
	For each seller $C_t\in\mathbb{C}$, we need to show its utility $\bar{u}_t$ when giving a truthful ask $a_t=c_t$ is not less than its corresponding utility $\bar{u}'_t$ when given an untruthful ask $a'_t\neq c_t$. Afterward, any notation $x_t$ and $x'_t$ refer to the concepts given by bid $a_t$ and $a'_t$ respectively.
	
	\noindent
	\textbf{(c)} The $C_t$ is a winning seller when asking truthfully: According to Algorithm \ref{a3}, its ask $a_t(=c_t)<a_{j_\phi}$ and at least one buyer are assigned to it. Hence, we have $|A_t|>0$. Its utility can be denoted by
	\begin{equation}
		\bar{u}_t=(a_{j_\phi}-c_t)\cdot\sum\nolimits_{V_i\in A_t}r_i>0
	\end{equation}
	since $\bar{p}_t=a_{j_\phi}$. Consider giving an untruthful ask $a'_t$, we can discuss as follows:
	\begin{itemize}
		\item $a'_t\geq a_{j_\phi}$: It loses this auction because the new median ask $a'_{j_\phi}$ is not less than $a_{j_\phi}$ and $a'_t\geq a'_{j_\phi}\geq a_{j_\phi}$. Thus, the utility is $\bar{u}'_t=0<\bar{u}_t$.
		\item $a'_t<a_{j_\phi}$: At the time, the new median ask $a'_{j_\phi}$ is equal to $a_{j_\phi}$ and $a'_t<a'_{j_\phi}=a_{j_\phi}$. Moveover, the assignment of $C_t$ remains unchanged, $A'_t=A_t$, and the rewarded payment $\bar{p}'_t=\bar{p}_t$. According to (9), the utility when asking untruthfully is the same as that when asking truthfully. Thus, we have $\bar{u}'_t=\bar{u}_t$.
	\end{itemize}
	\textbf{(d)} The $C_t$ is not a winning seller when asking truthfully: At this time, its utility when asking truthfully is $\bar{u}_t=0$. Here, we need to analyze how the $C_t$ loses this auction. If losing since $c_t\geq a_{j_\phi}$, we have
	\begin{itemize}
		\item $a'_t<a_{j_\phi}$: The new median ask $a'_{j_\phi}$ is not more than $a_{j_\phi}$ and $a'_t\leq a'_{j_\phi}\leq a_{j_\phi}$. If the $C_t$ still loses the auction, its utility $\bar{u}'_t=0$. If the $C_t$ wins now, its utility $\bar{u}'_t\leq0$ because of the rewarded payment $\bar{p}'_t= a'_{j_\phi}\leq a_{j_\phi}\leq c_t$. Thus, we have $\bar{u}'_t\leq\bar{u}_t=0$.
		\item $a'_t\geq a_{j_\phi}$: It loses this auction because the new median ask $a'_{j_\phi}$ is equal to $a_{j_\phi}$ and $a'_t\geq a_{j_\phi}=a'_{j_\phi}$. Thus, the utility is $\bar{u}'_t=\bar{u}_t=0$.
	\end{itemize}
	If $c_t< a_{j_\phi}$ but still loses, there are two situations that no buyer $V_{st}$ gives a bid $b_s^t$ such that $b_s^t\geq a_{j_\phi}$ or utility $\hat{u}_{st}$ is not the maximum one for each $V_s\in\mathbb{V}_w$. Now,
	\begin{itemize}
		\item $a'_t<a_{j_\phi}$: The above two situations still happen because the new median ask $a'_{j_\phi}=a_{j_\phi}$ and $a'_t<a'_{j_\phi}$. Thus, we have $\bar{u}'_t=\bar{u}_t=0$.
		\item $a'_t\geq a_{j_\phi}$: It loses this auction because the new median ask $a'_{j_\phi}$ is not less than $a_{j_\phi}$ and $a'_t\geq a'_{j_\phi}\geq a_{j_\phi}$. Thus, the utility is $\bar{u}'_t=\bar{u}_t=0$.
	\end{itemize}
	 Therefore, the sellers are truthful.
	 
	 In summary, both buyers and sellers cannot improve the utility by deviating from their valuations and costs.
\end{proof}

\begin{algorithm}[!t]
	\caption{\text{EMC $(\mathbb{V},\mathbb{C},\vec{B},\vec{R},\vec{A},\vec{K})$}}\label{a4}
	\begin{algorithmic}[1]
		\renewcommand{\algorithmicrequire}{\textbf{Input:}}
		\renewcommand{\algorithmicensure}{\textbf{Output:}}
		\REQUIRE $\mathbb{V},\mathbb{C},\vec{B},\vec{R},\vec{A},\vec{K}$
		\ENSURE $\mathbb{V}_w,\mathbb{C}_w,\sigma,\hat{\mathbb{P}}_w,\bar{\mathbb{P}}_w$
		\STATE $(\mathbb{V}_c,\mathbb{C}_c,a_{j_\psi})\leftarrow$ EMC-WCD $(\mathbb{V},\mathbb{C},\vec{B},\vec{A})$
		\STATE $(\mathbb{V}_w,\mathbb{C}_w,\sigma,\hat{\mathbb{P}}_w,\bar{\mathbb{P}}_w)\leftarrow$ EMC-AP $(\mathbb{V}_c,\mathbb{C}_c,a_{j_\psi},\vec{R},\vec{K})$
		\RETURN $(\mathbb{V}_w,\mathbb{C}_w,\sigma,\hat{\mathbb{P}}_w,\bar{\mathbb{P}}_w)$ 
	\end{algorithmic}
\end{algorithm}

\begin{lem}
	The TMC is computationally efficient.
\end{lem}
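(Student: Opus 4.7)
The plan is to walk through Algorithms 1, 2, and 3 step by step, bound the cost of each line polynomially in $n=|\mathbb{V}|$ and $m=|\mathbb{C}|$, and then combine. The key structural observation I would open with is that the expanded set $\mathbb{V}'$ built in line 2 of Algorithm 2 has at most $nm$ elements, since $\mathbb{V}'$ contains at most one entry $V_{st}$ per bid $b_s^t$ and the matrix $\vec{B}$ has $nm$ entries. This cap controls the size of every subsequent ordered list used in TMC.

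For TMC-WCD (Algorithm 2), I would argue: scanning $\vec{B}$ to build $\mathbb{V}'$ costs $O(nm)$; sorting $\mathbb{V}'$ by bid value takes $O(nm\log(nm))$ with any comparison-based sort; sorting the $m$ asks is $O(m\log m)$; reading off the median $a_{j_\phi}$ is $O(1)$; the linear scan used to find the threshold index $\varphi$ is $O(nm)$; and the final \textbf{for} loop that filters $\mathbb{V}''$ into $\mathbb{V}_c$ and inserts the corresponding sellers into $\mathbb{C}_c$ is $O(nm)$. Hence TMC-WCD runs in $O(nm\log(nm))$.

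For TMC-AP (Algorithm 3), I would argue: sorting $\mathbb{V}_c$ by total bid $b_s^t\cdot r_s$ costs $O(nm\log(nm))$ since $|\mathbb{V}_c|\le nm$; inside the \textbf{while} loop each element of $\mathbb{Q}_c$ is popped at most once, and the per-iteration work is dominated by either the constant-time insertion into $\mathbb{H}_{t_l}$ (when $k'_{t_l}>0$) or, in the \textbf{else} branch, an $O(k_{t_l})$ reprice of $\mathbb{H}_{t_l}$ plus an $O(|\mathbb{Q}_c|)$ sweep removing all remaining $V_{st_l}$ from $\mathbb{Q}_c$; summed crudely this gives $O((nm)^2)$, which is still polynomial. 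The trailing \textbf{for} loop over $V_s\in\mathbb{V}_w$ inspects $\mathbb{I}_s\subseteq\mathbb{C}_c$ in $O(m)$ per buyer to pick the utility-maximizing $C_{t_m}$, for a total of $O(nm)$, and the payment-assignment loop over $\mathbb{C}_w$ is $O(m)$. Combining, TMC-AP is polynomial.

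Finally, Algorithm 1 is just one sequential call to TMC-WCD followed by one call to TMC-AP, so the overall running time of TMC is $O(\mathrm{poly}(n,m))$, which establishes computational efficiency. The only mildly delicate point, and the one I would be careful about, is the amortization in the \textbf{else} branch of the \textbf{while} loop in Algorithm 3: one must verify that removing every remaining $V_{st_l}\in\mathbb{Q}_c$ after $\mathbb{H}_{t_l}$ becomes full does not blow up the bookkeeping, but since each element of $\mathbb{Q}_c$ is deleted at most once across the entire execution, the total deletion work is still $O(nm)$ and the polynomial bound goes through.
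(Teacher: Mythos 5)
Your proof is correct and takes essentially the same approach as the paper's: a line-by-line accounting of Algorithms 2 and 3, anchored on the observation that $|\mathbb{V}'|\leq nm$ bounds every subsequent list. The paper tightens a few of your estimates (it uses $|\mathbb{V}_c|\leq n\phi$ and bounds the while-loop by $O(n+\sum_{i=1}^{m}k_i)$ to conclude an overall $O(nm\log(nm))$), but since the lemma only asserts polynomial running time, your looser bounds and the amortized deletion argument establish the same conclusion.
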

\begin{proof}
	In Algorithm \ref{a2}, there are $nm$ buyers in $\mathbb{V}'$, hence it takes $O(nm\log(nm))$ and $O(m\log(m))$ times to sort the $\mathbb{V}'$ and $\mathbb{C}'$ respectively. The size of $\mathbb{V}''$ in line 7 is at most $n\phi$, and the number of iterations in the for-loop (line 8-15) is at most $n\phi$. Consequently, the time complexity of Algorithm \ref{a2} is $O(nm\log(nm))$. In Algorithm \ref{a3}, it takes $O(n\phi\log(n\phi))$ times to sort the $V_c$. The number of iterations in the while-loop (line 5-18) is at most $n$. The line 12 can be executed at most $\sum_{i=1}^{m}k_i$ times, thus it takes $O(n+\sum_{i=1}^{m}k_i)$ time to execute this while-loop. Besides, there are at most $n$ buyers in $\mathbb{V}_w$ and find the best (line 22) from at most $\phi$ sellers, thus it takes $O(n\phi)$ time to execute the for-loop (line 20-28). Consequently, the time complexity of Algorithm \ref{a2} is $O(n\phi\log(n\phi))$ and overall time complexity of TMC is $O(nm\log(nm))$.
\end{proof}

\begin{thm}
	The TMC is individually rational, budget balanced, truthful, and computationally efficient.
\end{thm}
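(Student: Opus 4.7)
The plan is to observe that Theorem 1 is the natural aggregation of Lemmas 1--4, each of which already isolates one of the four required properties of TMC. So my proof would essentially consist of invoking these four lemmas in sequence and verifying that their combined conclusion is exactly the statement of the theorem. First I would re-state the four desired properties as listed in Sec. V-B (individual rationality, budget balance, truthfulness, computational efficiency), then cite the corresponding lemma for each and note that the hypotheses of every lemma are exactly the setup of TMC in Algorithms~1--3, which is the setting of the theorem.

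More concretely, I would begin with individual rationality: by Lemma~1, for every winning buyer $V_s \in \mathbb{V}_w$ we have $\hat{p}_s \leq b_s^{\sigma(s)}$ and for every winning seller $C_t \in \mathbb{C}_w$ we have $\bar{p}_t = a_{j_\phi} \geq a_t$, which are the two inequalities demanded by the definition. Next I would handle budget balance via Lemma~2, which yields $\sum_{V_i\in\mathbb{V}_w}(\hat{p}_i-\bar{p}_{\sigma(i)})\cdot r_i \geq 0$ together with the capacity constraint $|A_j|\leq k_j$ that is specific to the constrained setting. Truthfulness would then follow by Lemma~3, where buyer-side and seller-side deviations are each ruled out. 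Finally, computational efficiency is furnished by Lemma~4, with overall complexity $O(nm\log(nm))$.

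The main obstacle is really not in assembling the theorem itself but in having the four lemmas in hand; among those, Lemma~3 (truthfulness) is the delicate piece because the constrained capacity $k_t$ forces the ``tentative set'' $\mathbb{H}_t$ to behave like a critical-bid selection whose threshold $\max\{a_{j_\phi}, b_{s_l}^{t_l}\cdot(r_{s_l}/r_s)\}$ must be monotone and independent of $V_s$'s own bid, and the full case analysis (winning vs.\ losing buyer, then $V_{st}\notin\mathbb{V}_c$, $V_{st}\in\mathbb{V}_c$ but $V_s\notin\mathbb{H}_t$, and $V_s\in\mathbb{H}_t$; analogous subcases for sellers around the median ask $a_{j_\phi}$) has already been carried out there. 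For the theorem statement itself, once Lemmas~1--4 are available the proof reduces to a one-line conjunction and I would simply write ``Combining Lemmas 1--4, TMC satisfies all four properties, completing the proof.''

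Thus the proof proposal is essentially a bookkeeping step: restate each property, quote the matching lemma, and conclude. I would not reprove any of the subcases already exhausted in Lemma~3, nor redo the complexity counting from Lemma~4; I would only make sure the four lemmas indeed cover the four bullet points of Sec.~V-B under the identical notation $(\mathbb{V},\mathbb{C},\vec{B},\vec{R},\vec{A},\vec{K})$ and the outputs $(\mathbb{V}_w,\mathbb{C}_w,\sigma,\hat{\mathbb{P}}_w,\bar{\mathbb{P}}_w)$ produced by Algorithm~1, so that no hidden quantifier mismatch appears between the lemmas and the theorem.
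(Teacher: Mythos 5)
Your proposal matches the paper's proof exactly: the paper proves Theorem~1 by simply combining Lemmas~1--4, one for each of the four properties. Your additional bookkeeping (restating each property and checking the lemmas share the theorem's setup) is sound and nothing more is needed.
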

\begin{proof}
	It can be derived from Lemma 1 to Lemma 4.
\end{proof}

\begin{algorithm}[!t]
	\caption{\text{EMC-AP $(\mathbb{V}_c,\mathbb{C}_c,a_{j_\phi},\vec{R},\vec{K})$}}\label{a5}
	\begin{algorithmic}[1]
		\renewcommand{\algorithmicrequire}{\textbf{Input:}}
		\renewcommand{\algorithmicensure}{\textbf{Output:}}
		\REQUIRE $\mathbb{V}_c,\mathbb{C}_c,a_{j_\phi},\vec{R},\vec{K}$
		\ENSURE $\mathbb{V}_w,\mathbb{C}_w,\sigma,\hat{\mathbb{P}}_w,\bar{\mathbb{P}}_w$
		\STATE $\mathbb{V}_w\leftarrow\emptyset,\mathbb{C}_w\leftarrow\emptyset,\hat{\mathbb{P}}_w\leftarrow\emptyset,\bar{\mathbb{P}}_w\leftarrow\emptyset$
		\STATE Sort the buyers in $\mathbb{V}_c$, get an ordered queue $\mathbb{Q}_c=\langle V_{s_1t_1},V_{s_2t_2},\cdots,V_{s_yt_y}\rangle$ such that $b_{s_1}^{t_1}\cdot r_{s_1}\geq b_{s_2}^{t_2}\cdot r_{s_2}\geq\cdots\geq b_{s_y}^{t_y}\cdot r_{s_y}$
		\STATE $\vec{K}'=(k'_1,k'_2,\cdots,k'_m)$ copied from vector $\vec{K}$
		\WHILE {$\mathbb{Q}_c\neq\emptyset$}
		\STATE $V_{s_lt_l}\leftarrow\mathbb{Q}_c$.pop(0) 
		\IF {$k_{t_l}>0$}
		\STATE $\sigma(s_l)=t_l,k'_{t_l}\leftarrow k'_{t_l}-1$
		\STATE $\mathbb{V}_w\leftarrow\mathbb{V}_w\cup\{V_{s_l}\},\hat{p}_{s_l}\leftarrow a_{j_\phi},\hat{\mathbb{P}}_w\leftarrow\hat{\mathbb{P}}_w\cup\{\hat{p}_{s_l}\}$
		\IF {$C_{t_l}\notin\mathbb{C}_w$}
		\STATE $\mathbb{C}_{w}\leftarrow\mathbb{C}_w\cup\{C_{t_l}\}$
		\ENDIF
		\FOR {each $V_{s_lt}\in\mathbb{Q}_c$}
		\STATE $\mathbb{Q}_c\leftarrow\mathbb{Q}_c\backslash\{V_{s_lt}\}$
		\ENDFOR
		\ELSE
		\STATE $A_{t_l}\leftarrow\{V_s\in\mathbb{V}_w:\sigma(s)=t_l\}$
		\FOR {each $V_s\in A_{t_l}$}
		\STATE $\hat{p}_s\leftarrow\max\{a_{j_\phi},b_{s_l}^{t_l}\cdot(r_{s_l}/r_s)\}\in\hat{\mathbb{P}}_w$
		\ENDFOR
		\FOR {each $V_{st_l}\in\mathbb{Q}_c$}
		\STATE $\mathbb{Q}_c\leftarrow\mathbb{Q}_c\backslash\{V_{st_l}\}$
		\ENDFOR
		\ENDIF
		\ENDWHILE
		\FOR {each $C_t\in\mathbb{C}_w$}
		\STATE $\bar{p}_{t}\leftarrow a_{j_\phi},\bar{\mathbb{P}}_w\leftarrow\bar{\mathbb{P}}_w\cup\{\bar{p}_{t}\}$ // \textit{Payments}
		\ENDFOR
		\RETURN $(\mathbb{V}_w,\mathbb{C}_w,\sigma,\hat{\mathbb{P}}_w,\bar{\mathbb{P}}_w)$ 
	\end{algorithmic}
\end{algorithm}

\section{Efficient Mechanism For Charging}
Even though TMC is able to ensure truthfulness, it sacrifices the system efficiency. Shown as Algorithm \ref{a3}, suppose the winning buyer $V_s$ satisfies $V_s\in\mathbb{H}_{t_1}$ and $V_s\in\mathbb{H}_{t_2}$, it can be assigned to only one seller $t\in\{t_1,t_2\}$. Thus, another seller will have a charging pile empty, which could be used to charge other buyers. Thus, for each winning buyer $V_s\in\mathbb{V}_w$ with $|\mathbb{I}_s|>1$, there are $|\mathbb{I}_s|-1$ charging piles being wasted. To address this drawback, we propose an Efficient Mechanism for Charging (EMC) to improve system efficiency and ensure its truthfulness to some extent.

\subsection{Algorithm Design}
The process of EMC is shown in Algorithm \ref{a4}. Similar to TMC, it is composed of Winning Candidate Determination (EMC-WCD) and Assignment \& Pricing (EMC-AP). Here, the EMC-WCD is the same as TMC-WCD shown in Algorithm \ref{a2}, which can be used to generate a winning buyer candidate set $\mathbb{V}_c$ and winning seller candidate set $\mathbb{C}_c$. The EMC-AP is shown in Algorithm \ref{a5}.

Shown as Algorithm \ref{a5}, in the process of assignment \& pricing, we sort the winning buyer candidates in $\mathbb{V}_c$ in descending order based on their total bids. Then, we give priority to assigning the buyer that can give the maximum total bid, which is the critical step to improve the system efficiency. At each iteration, we pop the buyer that has the maximum total bid from $\mathbb{Q}_c$, denoted by $V_{s_lt_l}$ and check whether the seller $C_{t_l}$ requested by $V_{s_l}$ still have available charging piles. If yes, $k'_{t_l}>0$, the buyer $V_{s_l}$ will be assigned to seller $C_{t_l}$. Also, $V_{s_l}$ is a winning buyer,  $C_{t_l}$ is a winning seller, and the price charged to buyer $V_{s_l}$ is given by $\hat{p}_{s_l}=a_{j_\phi}$ tentatively. Furthermore, the requests from buyer $V_{s_l}$ to other sellers should be deleted from $\mathbb{Q}_c$ since the buyer $V_{s_l}$ has been assigned. If no, $k'_{t_l}=0$, the buyer $V_{s_l}$ will not be assigned to seller $C_{t_l}$ because there is no unoccupied charging piles in $C_{t_l}$. It implies that in previous iterations, the buyers in $A_{t_l}$ have been assigned to seller $C_{t_l}$, then the price charged to each winning buyer $V_s\in A_{t_l}$ has to be changed to its critical price $\hat{p}_s=b_{s_l}^{t_l}\cdot(r_{s_l}/r_s)$. Finally, the payment rewarded to winning seller is given by $a_{j_\phi}$ unanimously.

\subsection{Properties of EMC}
Similar to Sec. \uppercase\expandafter{\romannumeral6}-B, we analyze whether our EMC mechanism satisfies the aforementioned four properties.

\begin{lem}
	The EMC is individually rational. 
\end{lem}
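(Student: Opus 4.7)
The plan is to follow the template of Lemma~1, verifying individual rationality separately for winning sellers and winning buyers, using the fact that EMC-WCD is identical to TMC-WCD so the threshold $a_{j_\phi}$ and the candidate sets $\mathbb{V}_c,\mathbb{C}_c$ behave exactly as before. The seller side is essentially immediate: any $C_t\in\mathbb{C}_w\subseteq\mathbb{C}_c$ is selected in EMC-WCD only if $a_t<a_{j_\phi}$, and EMC-AP (line 26) sets $\bar{p}_t=a_{j_\phi}$, so $\bar{p}_t>a_t$ as required.

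For the buyer side, I would fix a winning buyer $V_s\in\mathbb{V}_w$ with $\sigma(s)=t_l$ and split on how $\hat{p}_s$ was last assigned inside the while-loop of Algorithm~5. In the first case $\hat{p}_s=a_{j_\phi}$, and since $V_s$ reached the queue $\mathbb{Q}_c$ through $\mathbb{V}_c$, the buyer $V_{st_l}$ must have survived TMC-WCD, which forces $b_s^{t_l}\geq a_{j_\phi}=\hat{p}_s$. In the second case, $\hat{p}_s$ is overwritten in line~18 to $\max\{a_{j_\phi},b_{s_l}^{t_l}\cdot(r_{s_l}/r_s)\}$ when a later request $V_{s_lt_l}$ finds the pile capacity at $C_{t_l}$ exhausted; here the key observation is that because $V_s$ was popped strictly before $V_{s_l}$ from the queue that is sorted by total bid, we have $b_s^{t_l}\cdot r_s\geq b_{s_l}^{t_l}\cdot r_{s_l}$, i.e., $b_{s_l}^{t_l}\cdot(r_{s_l}/r_s)\leq b_s^{t_l}$. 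Combined with $b_s^{t_l}\geq a_{j_\phi}$ from the candidate condition, the maximum in line~18 is still bounded above by $b_s^{t_l}$, so $\hat{p}_s\leq b_s^{t_l}=b_s^{\sigma(s)}$.

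Taking the seller and buyer arguments together yields $\hat{p}_s\leq b_s^{\sigma(s)}$ for every $V_s\in\mathbb{V}_w$ and $\bar{p}_t\geq a_t$ for every $C_t\in\mathbb{C}_w$, which is individual rationality. I do not expect a genuine obstacle here; the only subtle point to be careful about is the second case of the buyer analysis, where one must explicitly invoke the descending-total-bid ordering of $\mathbb{Q}_c$ to guarantee that the critical price $b_{s_l}^{t_l}\cdot(r_{s_l}/r_s)$ does not exceed the bid $b_s^{t_l}$ — this is where EMC and TMC diverge algorithmically (EMC updates prices of already-assigned buyers rather than pulling them from a tentative set), so the argument must be rephrased in terms of queue order rather than membership in $\mathbb{H}_{t_l}$.
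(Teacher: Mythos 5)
Your proposal is correct and follows essentially the same route as the paper, which simply states that the EMC case ``can be discussed similar to proof of Lemma 1''; your argument is exactly that adaptation, with the same seller-side bound $\bar{p}_t=a_{j_\phi}>a_t$ and the same two-case buyer analysis ($\hat{p}_s=a_{j_\phi}\leq b_s^{\sigma(s)}$ from the candidate condition, and $b_{s_l}^{t_l}\cdot(r_{s_l}/r_s)\leq b_s^{\sigma(s)}$ from the descending-total-bid order of $\mathbb{Q}_c$). Your explicit rephrasing of the second case in terms of queue order rather than membership in $\mathbb{H}_{t_l}$ is the right way to transplant the argument to Algorithm~5 and supplies detail the paper omits.
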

\begin{proof}
	It can be discussed similar to proof of Lemma 1.
\end{proof}

\begin{lem}
	The EMC is budget balanced. 
\end{lem}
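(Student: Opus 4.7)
The plan is to mirror the argument used for Lemma 2, but adapted to the pricing flow of Algorithm \ref{a5}. First I would observe that, by construction of the while-loop in EMC-AP, each winning buyer $V_s$ is placed into $\mathbb{V}_w$ exactly once (when line 5 pops $V_{s_lt_l}$ in the $k'_{t_l}>0$ branch) and is simultaneously assigned a single seller via $\sigma(s_l)=t_l$; after this happens, all remaining request items $V_{s_l t}$ for this buyer are purged from $\mathbb{Q}_c$ in the inner for-loop (lines 12--14). Hence the mapping $\sigma$ from $\mathbb{V}_w$ to $\mathbb{C}_w$ is well-defined and many-to-one, and $\mathbb{V}_w=\bigcup_{C_j\in\mathbb{C}_w}A_j$ as required.

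Next I would verify the per-edge inequality $\hat{p}_s\geq \bar{p}_{\sigma(s)}$ for every winning buyer. The payment to any winning seller is fixed by line 26 at $\bar{p}_t=a_{j_\phi}$. The charged price of a winning buyer is set in one of only two places: either line 8 assigns $\hat{p}_{s_l}\leftarrow a_{j_\phi}$ at the moment of winning, or line 18 later overwrites it with $\max\{a_{j_\phi},\,b_{s_l}^{t_l}\cdot(r_{s_l}/r_s)\}$ when a subsequent unseated request for the same seller triggers the else-branch. In both cases $\hat{p}_s\geq a_{j_\phi}=\bar{p}_{\sigma(s)}$. Summing $(\hat{p}_s-\bar{p}_{\sigma(s)})\cdot r_s\geq 0$ over $V_s\in\mathbb{V}_w$ and regrouping the right-hand side by seller (using the many-to-one structure) yields exactly inequality (5), i.e.\ the budget-balance inequality.

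Finally, for the capacity constraint $|A_t|\leq k_t$ for each $C_t\in\mathbb{C}_w$, I would argue by the decrement invariant on $\vec{K}'$: a buyer is added to $A_{t_l}$ only on the branch guarded by $k'_{t_l}>0$ (line 6), after which $k'_{t_l}$ is decremented by one. Since $k'_{t_l}$ starts at $k_{t_l}$ and is never incremented, the number of buyers assigned to $C_{t_l}$ cannot exceed $k_{t_l}$.

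No step looks genuinely difficult; the only subtlety worth flagging is making sure the late overwrite in line 18 never drops $\hat{p}_s$ below $a_{j_\phi}$, which is exactly why the \textbf{max} is taken there. Once that observation is in place, the proof proposal reduces to the two bullet-structure used in Lemma 2, and the computational-efficiency/truthfulness concerns can be deferred to their own lemmas.
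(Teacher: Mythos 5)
Your proposal is correct and follows essentially the same route as the paper, which simply defers to the argument of Lemma~2: a many-to-one assignment with $\hat{p}_s\geq a_{j_\phi}=\bar{p}_{\sigma(s)}$ for every winning buyer (guaranteed by the $\max$ in the price update) plus the capacity bound enforced by decrementing $k'_{t_l}$. Your write-up just makes explicit, for Algorithm~\ref{a5}, the steps the paper leaves implicit.
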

\begin{proof}
	It can be discussed similar to proof of Lemma 2.
\end{proof}

\begin{lem}
	The EMC is not truthful, but the truthfulness is held for sellers.
\end{lem}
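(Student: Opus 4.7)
The plan is to treat the lemma as two independent assertions: a negative claim (EMC is not truthful for buyers) that must be certified by a concrete counterexample, and a positive claim (seller truthfulness) that mirrors Lemma~3 parts~(c)--(d). Because EMC-WCD is literally identical to TMC-WCD and EMC-AP uses the ask vector $\vec{A}$ only through the scalar threshold $a_{j_\phi}$, the seller-side argument will port over essentially unchanged; the nontrivial step is to locate and verify a buyer-side manipulation.

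The source of buyer manipulability is the greedy, irrevocable assignment in EMC-AP: a buyer is locked into the first seller whose pair pops from $\mathbb{Q}_c$ with capacity left, even if the buyer would strictly prefer (after the eventual critical-price bump) to be placed at a less-contested seller. A buyer with a very large bid for a popular seller is particularly vulnerable: its high total bid forces it to the front of the queue, it is tentatively placed at the congested seller at the floor price $a_{j_\phi}$, and then a later contender for the same seller bumps its price up to $b_{s_l}^{t_l}\cdot(r_{s_l}/r_s)$. I would construct a small instance with two capacity-one sellers (plus two dummy high-ask sellers so that $\phi=\lceil(m+1)/2\rceil$ gives a benign median) and a ``victim'' $V_1$ with $v_1^1>v_1^2>a_{j_\phi}$, together with decoy buyers $V_2,V_3$ whose bids also target the congested seller at levels just below $v_1^1$. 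Under truthful bidding, $V_1$ is routed to $C_1$ and its price is bumped up by $V_2$'s total bid, leaving a small residual utility; under the deviation $b_1^1\leftarrow 0$, $V_1$ is removed from the queue for $C_1$, $C_1$ is taken by $V_2$ (whose price is then bumped by $V_3$), and $V_1$ is assigned to $C_2$ at price $a_{j_\phi}$, yielding strictly larger utility $(v_1^2-a_{j_\phi})\cdot r_1$. The verification is a routine trace of which request pops, which one is dropped, and how the tentative prices in line~18 of Algorithm~\ref{a5} overwrite the initial $a_{j_\phi}$.

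For seller truthfulness, I would mirror Lemma~3~(c)--(d) almost verbatim. The key facts to reuse are: (i) EMC-WCD coincides with TMC-WCD, so the analysis of how a deviation $a'_t$ affects membership in $\mathbb{C}_c$ and the value of the median $a_{j_\phi}$ is word-for-word identical; (ii) the winning payment is $\bar{p}_t=a_{j_\phi}$ exactly as in TMC, so no gain can come from the payment formula; and (iii) EMC-AP depends on $\vec{A}$ only through $a_{j_\phi}$, so the assignment restricted to $\mathbb{C}_c$ is insensitive to a seller's own ask as long as its membership does not flip. Splitting into the truthful-winner subcases $a'_t\geq a_{j_\phi}$ (ejected from $\mathbb{C}_c$, utility drops to $0$) and $a'_t<a_{j_\phi}$ (median and assignment unchanged, utility unchanged), and the two truthful-loser subcases where either $c_t\geq a_{j_\phi}$ or $c_t<a_{j_\phi}$ but $C_t$ has no admissible buyer, yields $\bar{u}'_t\leq \bar{u}_t$ in every case.

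The main obstacle is engineering the counterexample cleanly: one must respect the quirks of EMC-WCD (strict inequality $a_t<a_{j_\phi}$, positivity of bids, and the ceiling in the median index $\phi$), and then trace the tentative-price updates inside Algorithm~\ref{a5} precisely enough that the comparison $\hat{u}'_1>\hat{u}_1$ is indisputable. The seller-side argument introduces no new ideas; it only requires confirming that none of the branches peculiar to EMC-AP --- in particular, the critical-price bump in line~18 --- re-injects any dependence on the ask vector beyond the single scalar $a_{j_\phi}$.
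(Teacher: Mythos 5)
Your proposal is correct and takes essentially the same route as the paper: the buyer-side manipulation you identify (greedy lock-in at a contested seller at $a_{j_\phi}$, followed by the critical-price bump, versus deviating so as to land at a less-contested seller at the floor price) is exactly the scenario the paper exhibits, and the seller-side argument is, as you say, a verbatim port of cases (c)--(d) of Lemma~3 since the payment is still $a_{j_\phi}$ and the ask enters only through the median. Your insistence on a fully instantiated numerical counterexample is if anything slightly more rigorous than the paper, which only presents the parametric scenario and observes that the utility comparison can go either way.
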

\begin{proof}
	For a buyer $V_s\notin\mathbb{V}_w$ when bidding truthfully, giving an untruthful bid $(b_s^t)'$ to seller $C_t$ cannot increase the utility such that $\hat{u}'_{st}<\hat{u}_{st}=0$, which can be divided into $V_{st}\notin\mathbb{V}_c$ (similar to the analysis for $V_{st}\notin\mathbb{V}_c$ in part (a) of Lemma 3) and $V_{st}\in\mathbb{V}_c$. Consider the case $V_{st}\in\mathbb{V}_c$, it exists a $V_{ot}\in\mathbb{Q}_c$ with $b_o^t\cdot r_o\geq b_s^t(=v_s^t)\cdot r_s$ which is the last one can be assigned to seller $C_t$ in Algorithm \ref{a5}. To replace $V_{ot}$, the $V_s$ has to bid a $(b_s^t)'$ such that $(b_s^t)'\geq b_o^t\cdot(r_o/r_s)$ and the charged price will be $\hat{p}'_{st}=b_o^t\cdot(r_o/r_s)$. From here, we have $\hat{u}'_{st}\leq 0$ since $\hat{p}'_{st}\geq v_s^t$. For a buyer $V_s\in\mathbb{V}_w$, we give two examples where giving an untruthful bid is possible to improve its utility. There are two sellers $o_1$ and $o_2$ requested by the $V_s$ lie in $\mathbb{Q}_c$ when it bids truthfully, thus $\mathbb{Q}_c=\langle\cdots,V_{so_1},\cdots,V_{s_lo_1},\cdots,V_{so_2},\cdots\rangle$ with $v_s^{o_1}\cdot r_s\geq b_{s_l}^{o_1}\cdot r_{s_l}\geq v_s^{o_2}\cdot r_s$. The results returned by Algorithm \ref{a5} are $\sigma(s)=o_1$, $\hat{p}_s=b_{s_l}^{o_1}\cdot(r_{s_l}/r_s)$, and $k'_{o_2}>0$. At this time, its utility is $\hat{u}_s=(v_s^{o_1}-b_{s_l}^{o_1}\cdot(r_{s_l}/r_s))\cdot r_s$. When giving an untruthful bid $(b_s^{o_1})'$ such that $(b_s^{o_1})'<v_s^{o_2}$, the results are changed to $\sigma'(s)=o_2$, $\hat{p}'_s=a_{j_\phi}$. At this time, its utility is $\hat{u}'_s=(v_s^{o_2}-a_{j_\phi})\cdot r_s$. We cannot judge which one is larger since $v_s^{o_1}\geq v_s^{o_2}$ and $b_{s_l}^{o_1}\cdot(r_{s_l}/r_s)\geq a_{j_\phi}$. If $\hat{u}'_s>\hat{u}_s$, its utility can be improved by bidding untruthfully. Similarly, it can give an untruthful bid $(b_s^{o_2})'$ such that $(b_s^{o_2})'>v_s^{o_1}$, the results are changed to $\sigma'(s)=o_2$, $\hat{p}'_s=a_{j_\phi}$ as well. Thus, truthfulness is not held for winning buyers.
	
	The analysis for sellers are similar to the case (c) and (d) in proof of Lemma 3, thus truthfulness is held for sellers.
\end{proof}

\begin{lem}
	The EMC is computationally efficient. 
\end{lem}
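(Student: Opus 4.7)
The plan is to mirror the argument used for Lemma 4 by analysing EMC-WCD and EMC-AP separately and then combining. Since EMC-WCD is literally Algorithm~\ref{a2} (TMC-WCD), its running time of $O(nm\log(nm))$ can be imported verbatim from the proof of Lemma 4, so the only real work is to bound Algorithm~\ref{a5}.

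For EMC-AP, I would first observe that $|\mathbb{V}_c| \leq n\phi$ by construction in Algorithm~\ref{a2} with $\phi=\lceil(m+1)/2\rceil$, so the sort in line~2 costs $O(n\phi\log(n\phi))$. I would then bound the while-loop in lines~4--24 with an amortised charging argument: each element of $\mathbb{Q}_c$ is either popped in line~5 or removed in line~13 or line~21 at most once during the whole execution, so all pop and deletion work across the entire loop sums to $O(n\phi)$. The only remaining per-iteration cost is the for-loop at lines~17--19 inside the ``else'' branch.

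The key structural observation I would use to control that inner loop is that once the ``else'' branch is entered for some seller $C_{t_l}$ (i.e., $k'_{t_l}=0$), lines~20--22 flush every remaining $V_{st_l}$ from $\mathbb{Q}_c$; hence that branch fires at most once per seller, and in its single firing it touches at most $|A_{t_l}|\leq k_{t_l}$ already-winning buyers. Summing across all sellers contributes $O\!\left(\sum_{i=1}^{m}k_i\right)$ additional work.

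Combining the three pieces gives EMC-AP a running time of $O\!\left(n\phi\log(n\phi)+\sum_{i=1}^{m}k_i\right)$, and since $\phi\leq m$, the overall time complexity of EMC is $O(nm\log(nm))$, matching that of TMC. I do not expect any serious obstacle; the only subtlety is the amortised bookkeeping that prevents the ``else'' branch from being charged more than once per seller, and once that observation is in place the remaining arithmetic is routine.
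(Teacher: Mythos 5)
Your proposal is correct and follows the same route as the paper, whose proof of this lemma simply defers to the Lemma~4 analysis: bound EMC-WCD by the TMC-WCD cost $O(nm\log(nm))$, bound the sort in EMC-AP by $O(n\phi\log(n\phi))$, and amortize the while-loop so that each queue element is popped or deleted once and the ``else'' branch is charged at most once per seller for a total of $O\bigl(\sum_{i=1}^{m}k_i\bigr)$ extra work. The only difference is that you spell out the bookkeeping the paper leaves implicit.
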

\begin{proof}
	It can be discussed similar to proof of Lemma 4.
\end{proof}

\begin{thm}
	The EMC is individually rational, budget balanced, and computationally efficient, but not truthful.
\end{thm}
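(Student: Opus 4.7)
The plan is to aggregate the four property lemmas (Lemmas 5--8) established immediately above, so the proof of Theorem 2 becomes essentially bookkeeping. What I would present is a short paragraph pointing to each lemma and then, if a self-contained proof is desired, reproduce the key structural arguments in turn. The substance of the work lies inside those lemmas, so I would organize my writing around where each property comes from in Algorithm 5.

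Individual rationality and budget balance should be handled first and mirror Lemmas 1 and 2 for TMC. The only pricing primitives used by EMC are the uniform threshold $a_{j_\phi}$ for winning sellers and the critical price $\max\{a_{j_\phi},b_{s_l}^{t_l}(r_{s_l}/r_s)\}$ for winning buyers. Because (i) every winning seller $C_t\in\mathbb{C}_c$ satisfies $a_t<a_{j_\phi}=\bar p_t$, and (ii) any winning buyer $V_s$ either pays $a_{j_\phi}\le b_s^{\sigma(s)}$ or pays $b_{s_l}^{t_l}(r_{s_l}/r_s)\le b_s^{\sigma(s)}$ by the descending total-bid ordering, I would reuse those two-case arguments essentially verbatim. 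Budget balance then reduces to noting $\hat p_s\ge a_{j_\phi}=\bar p_{\sigma(s)}$ for every mapped pair and that $|A_t|\le k_t$ follows from the $k'_{t_l}$ decrement in Algorithm 5. Computational efficiency can be established the same way as Lemma 4: one sort of $\mathbb{V}_c$ in $O(n\phi\log(n\phi))$, a while-loop iterating at most $n\phi$ times, and inner deletions bounded by $O(n\phi+\sum k_i)$, composed with the $O(nm\log(nm))$ cost of EMC-WCD.

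The only genuinely new content is Lemma 7, which has two halves. For sellers, I would copy the case analysis from parts (c)--(d) of Lemma 3: the ask $a_t$ influences the outcome only through the median $a_{j_\phi}$ and through the condition $a_t<a_{j_\phi}$, both of which behave monotonically with $a_t$, so deviating cannot strictly increase $\bar u_t$. For buyers, I would exhibit an explicit counterexample: a buyer $V_s$ whose truthful bids place requests to two distinct sellers $C_{o_1},C_{o_2}$ into $\mathbb{Q}_c$, with $v_s^{o_1}r_s\ge b_{s_l}^{o_1}r_{s_l}\ge v_s^{o_2}r_s$ and $k'_{o_2}>0$ at the moment $V_{so_1}$ is popped. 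Truthfully, Algorithm 5 assigns $\sigma(s)=o_1$ with price $b_{s_l}^{o_1}(r_{s_l}/r_s)$; by lowering its bid on $o_1$ below $v_s^{o_2}$, the buyer is rerouted to $o_2$ at the cheaper price $a_{j_\phi}$, and for appropriate numerical gaps the utility strictly increases. This exhibits a profitable deviation, which suffices to refute truthfulness.

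The main obstacle will be making the buyer counterexample airtight: one must exhibit concrete numerical parameters $(v_s^{o_1},v_s^{o_2},b_{s_l}^{o_1},r_{s_l},r_s,a_{j_\phi})$ satisfying both the ordering constraints that placed $V_{so_1}$ and $V_{so_2}$ at their specific positions in $\mathbb{Q}_c$ and the inequality $(v_s^{o_2}-a_{j_\phi})r_s>(v_s^{o_1}-b_{s_l}^{o_1}(r_{s_l}/r_s))r_s$, while simultaneously ensuring that the untruthful bid does not disturb the median $a_{j_\phi}$ or remove $V_{so_2}$ from $\mathbb{V}_c$. The other subtlety is verifying that for the remaining non-winning buyer cases and for buyers with a single winning option, no profitable deviation exists, so the failure of truthfulness is genuinely localized to the multi-request winning case --- this keeps Lemma 7's phrase \emph{``in extreme cases''} honest.
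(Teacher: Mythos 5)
Your proposal is correct and matches the paper's approach exactly: the paper proves Theorem~2 simply by citing Lemmas~5--8, and your reconstruction of each lemma's argument (the two-case pricing analysis for individual rationality, the $\hat p_s\geq a_{j_\phi}=\bar p_{\sigma(s)}$ inequality for budget balance, the seller case analysis and the two-seller rerouting counterexample for the failure of buyer truthfulness) is the same as what the paper does inside those lemmas. Your added remark about making the counterexample numerically airtight is a fair observation, since the paper itself only asserts that the comparison of the two utilities is indeterminate rather than exhibiting concrete parameters.
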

\begin{proof}
	It can be derived from Lemma 5 to Lemma 8.
\end{proof}

In EMC, we attempt to assign each buyer in $\mathbb{V}_c$ to a seller in a greedy manner, thus avoiding the waste of charging piles. Therefore, it increases the number of winning buyers (successful trades) and improves system efficiency. Even though the winning buyers are able to improve their utilities by bidding untruthfully, this is difficult to achieve. In our BCS system, each player bids or asks privately. The buyer cannot have knowledge of other players' strategies such as other buyers' bids and sellers' asks. Thus, it is not able to predict whether it will win, which seller it will be assigned to, and its charged price. For the buyer losing the auction, it is possible to get a negative utility when bidding untruthfully. For the buyer winning the auction, despite the potential of improving its utility, there is also the possibility of losing the auction when bidding untruthfully. Obviously, they have no motivation to lie because the risks are great. Therefore, we can say the EMC is truthful to some extent.

\subsection{A Walk-through Example}
To understand our TMC and EMC algorithms clearly and compare their difference, we give a walk-through example with 5 buyers and 5 sellers. The bids and charging amounts of buyers, the asks and number of charging piles of sellers are shown in Table \ref{table1}.

In TMC-WCD (EMC-WCD) according to Algorithm \ref{a2}, the median of asks is denoted by $a_{j_\phi}=a_3=3$. We can get $\mathbb{V}'=\langle V_{31},V_{41},V_{14},V_{23},V_{32},V_{55},V_{12},V_{34},V_{42},V_{44},V_{54}\rangle$. By removing those $V_{st}\in\mathbb{V}'$ with $a_t\geq a_{j_\phi}$, we have $\mathbb{V}_c=\{V_{14},V_{32},V_{12},V_{34},V_{42},V_{44},V_{54}\}$ and $\mathbb{C}_c=\{C_2,C_4\}$. Then, sort the $\mathbb{V}_c$ according to their total bids, we have $\mathbb{Q}_c=\langle V_{32}=30,V_{14}=25,V_{34}=24,V_{12}=20,V_{42}=16,V_{44}=12,V_{54}=9\rangle$.

For TMC, in TMC-AP according to Algorithm \ref{a3}, we have tentative sets $\mathbb{H}_2=\{V_1,V_3\}$ with $\hat{p}_{12}=\max\{a_{j_\phi},b_4^2\cdot(r_4/r_1)\}=3.2,\hat{p}_{32}=\max\{a_{j_\phi},b_4^2\cdot(r_4/r_3)\}=3$ and $\mathbb{H}_4=\{V_1,V_3\}$ with $\hat{p}_{14}=\max\{a_{j_\phi},b_4^4\cdot(r_4/r_1)\}=3,\hat{p}_{34}=\max\{a_{j_\phi},b_4^4\cdot(r_4/r_3)\}=3$. For the buyer $V_1$, its utility satisfies $\hat{u}_{12}=(b_1^2-\hat{p}_{12})\cdot r_1=4<10=\hat{u}_{14}$. For the buyer $V_3$, its utility satisfies $\hat{u}_{32}>\hat{u}_{34}$. Thus, we have $\mathbb{V}_w=\{V_1,V_3\}$, $\mathbb{C}_w=\{C_2,C_4\}$, $\{\sigma(1)=4,\sigma(3)=2\}$, $\hat{P}_w=\{\hat{p}_1=3,\hat{p}_3=3\}$, and  $\bar{P}_w=\{\bar{p}_2=3,\bar{p}_4=3\}$.

\begin{table}[!t]
	\renewcommand{\arraystretch}{1.2}
	\caption{An example with 5 buyers and 5 sellers.}
	\label{table1}
	\centering
	\begin{tabular}{c|ccccc|c}
		\hline
		$\vec{B}$ & $C_1$ & $C_2$ & $C_3$ & $C_4$ & $C_5$ & $\vec{R}$\\
		\hline
		$V_1$ & 0 & 4 & 0 & 5 & 2 & 5\\
		$V_2$ & 2 & 0 & 5 & 1 & 0 & 2\\
		$V_3$ & 7 & 5 & 0 & 4 & 0 & 6\\
		$V_4$ & 6 & 4 & 0 & 3 & 0 & 4\\
		$V_5$ & 0 & 0 & 2 & 3 & 5 & 3\\
		\hline
		$\vec{A}$ & 4 & 1 & 3 & 2 & 5 & -\\
		$\vec{K}$ & 1 & 2 & 4 & 2 & 3 & -\\
		\hline
	\end{tabular}
\end{table}

For EMC, in EMC-AP according to Algorithm \ref{a5}, we assign buyer $V_3$ to seller $C_2$ with $\hat{p}_3=3$ in the first iteration, then $\mathbb{Q}_c$ is revised to $\mathbb{Q}_c=\langle V_{14}=25,V_{12}=20,V_{42}=16,V_{44}=12,V_{54}=9\rangle$. Repeat it until $\mathbb{Q}_c=\emptyset$, we have $\mathbb{V}_w=\{V_1,V_3,V_4,V_5\}$, $\mathbb{C}_w=\{C_2,C_4\}$, $\{\sigma(1)=4,\sigma(3)=2,\sigma(4)=2,\sigma(5)=4\}$, $\hat{P}_w=\{\hat{p}_1=3,\hat{p}_3=3,\hat{p}_4=3,\hat{p}_5=3\}$, and  $\bar{P}_w=\{\bar{p}_2=3,\bar{p}_4=3\}$. From this example, we can see that the winning sellers in TMC are not full where there are idle charging piles not used to charge vehicles. Therefore, the number of successful trades $|\mathbb{V}_w|=2$ in TMC is less than $|\mathbb{V}_w|=4$ in EMC, which explains the reason why the system efficiency of EMC is better than TMC.

\section{Numerical Simulations}
In this section, we implement our TMC and EMC algorithms, evaluate their performances, and verify whether they satisfy our design rationale separately.

\subsection{Simulation Setup}
To simulate our TMC and EMC, we consider a smart area $\mathbb{B}=(M,\mathbb{V},\mathbb{C})$ with $1000\times 1000$ $km^2$. There are $m$ CSs and $n$ EVs distributed uniformly in this area, where we default by $n=10\cdot m$ unless otherwise specified. For each CS $C_j\in\mathbb{C}$, its number of charging piles $k_j$ is generated from $\{1,2,\cdots,10\}$ randomly with probability $1/10$. The cost $c_j$ of CS $C_j$ is generated according to a uniform distribution within $(0,1]$. Similarly, for each EV $V_i\in\mathbb{V}$, its charging amount $r_i$ is sampled from a truncated normal distribution with mean $50$ and variance $1$ in interval $(0,100]$. To quantify its valuation $v_i^j$ to CS $C_j$, we defined the distance $d_i^j$ between $V_i$ and $C_j$ according to their coordinates, that is
\begin{equation}
	d_i^j=\sqrt{(x_i-x_j)^2+(y_i-y_j)^2}
\end{equation}
where $(x_i,y_i)$ and $(x_j,y_j)$ are the coordinates of $V_i$ and $C_j$. We assume the valuation of an EV to a CS is related to their distance. Thus, the larger $d_i^j$ is, the lower $v_i^j$ is. The maximum distance between two entities in this area is $1000\sqrt{2}$, thus we assume that $v_i^j=1-d_i^j/(1000\sqrt{2})$.

\subsection{Simulation Results and Analysis}

\begin{figure}[!t]
	\centering
	\subfigure[TMC]{
		\includegraphics[width=2.5in]{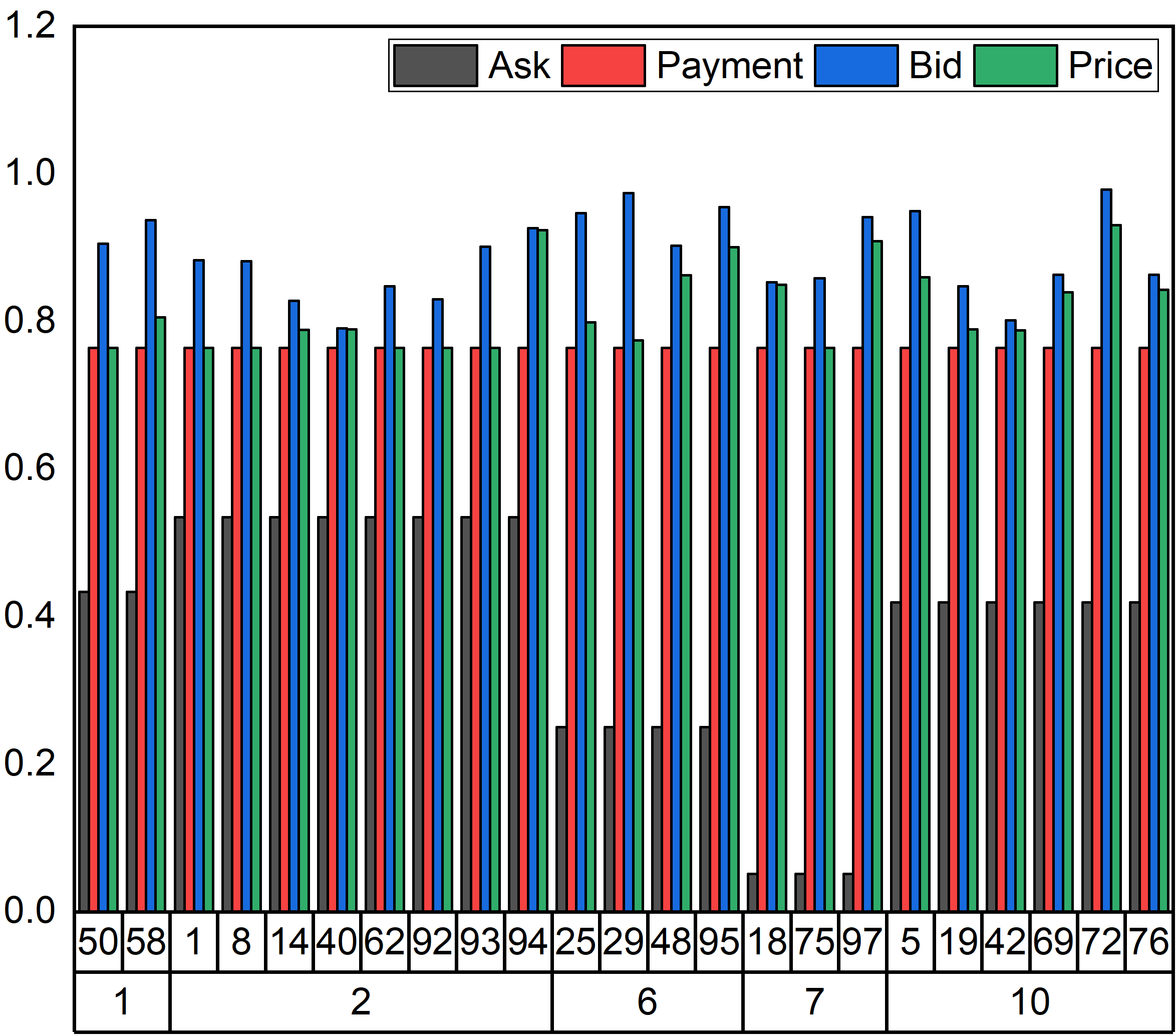}
	}%
	
	\subfigure[EMC]{
		\includegraphics[width=2.5in]{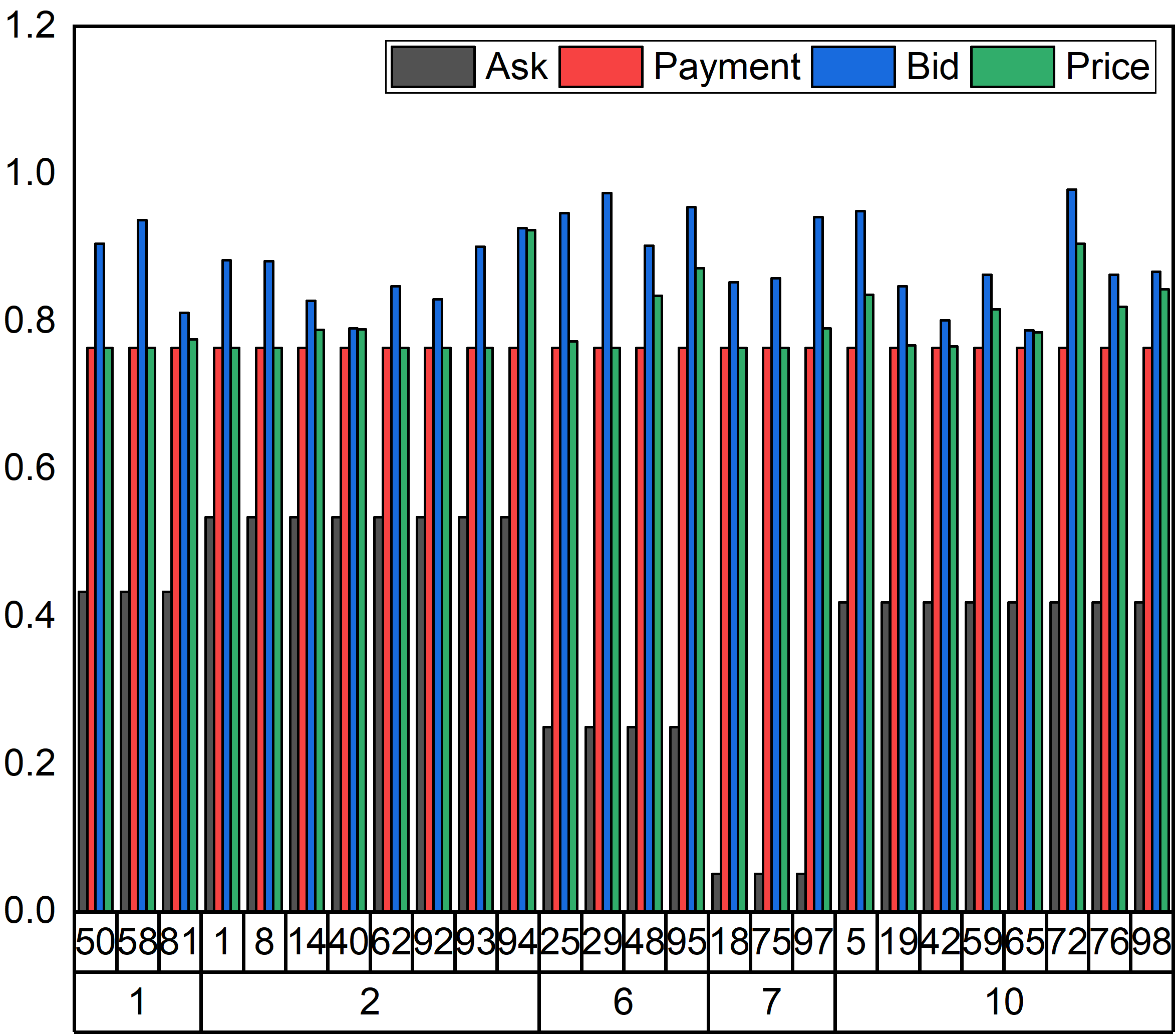}
	}%
	\centering
	\caption{The assignment results and individual rationality obtained by our TMC and EMC.}
	\label{fig4}
\end{figure}

To evaluate individual rationality, budget balance, and truthfulness, we consider a smart area with $m=10$ CSs and $n=100$ EVs. They can be denoted by $\mathbb{C}=\{C_1,C_2,\cdots,C_{10}\}$ and $\mathbb{V}=\{V_1,V_2,\cdots,V_{100}\}$. The median ask is $a_{j_\phi}=0.764$ and there are five winning sellers, thus $\mathbb{C}_w=\{C_1,C_2,C_6,C_7,C_{10}\}$. Moreover, the corresponding number of charging piles of CSs in $\mathbb{C}_w$ is given by $\{r_1:3,r_2:8,r_6:4,r_7:3,r_{10}:8\}$.

\begin{figure}[!t]
	\centering
	\subfigure[Buyer $V_{50}\in\mathbb{V}_w$]{
		\includegraphics[width=0.48\linewidth]{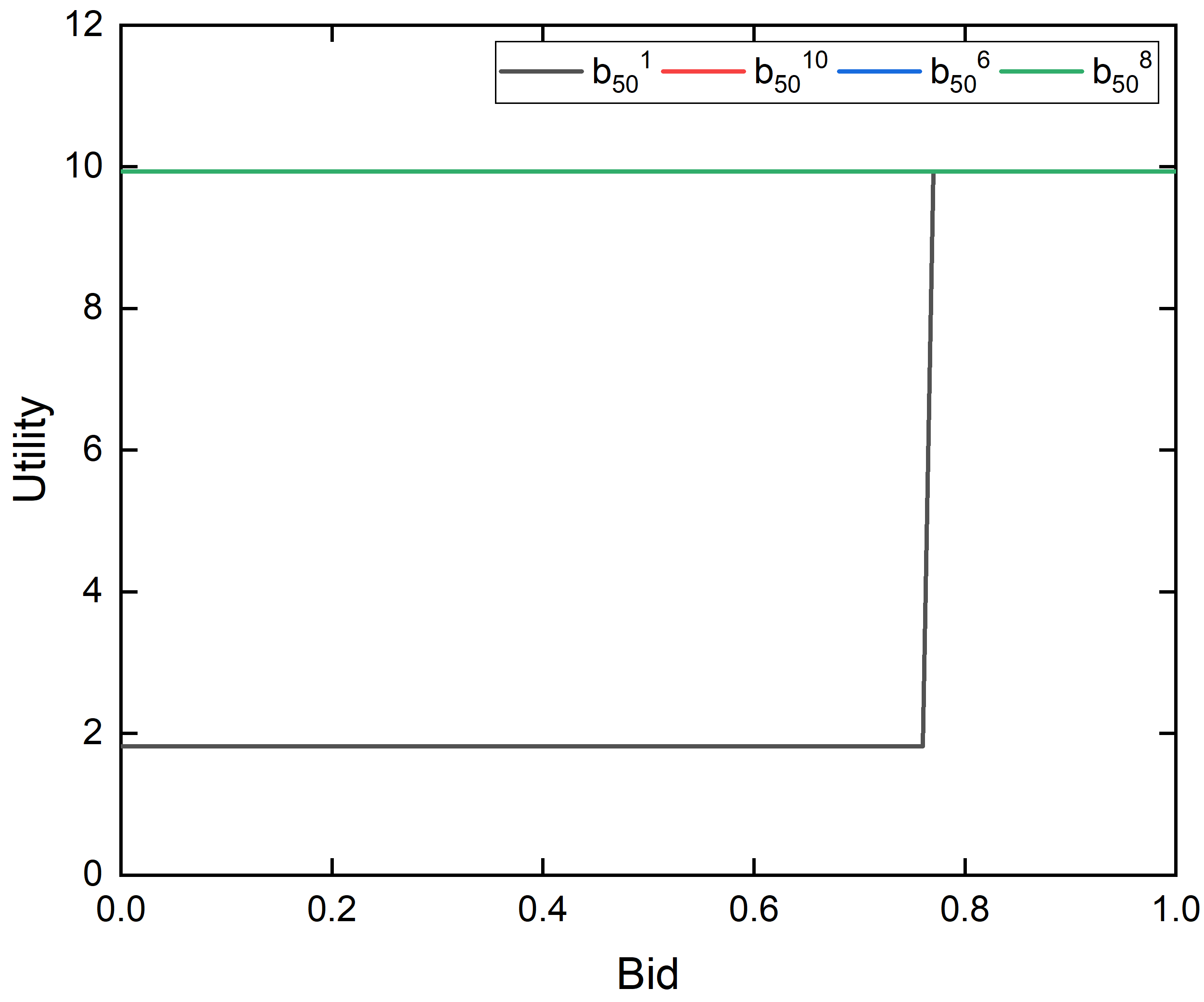}
	}%
	\subfigure[Seller $C_{1}\in\mathbb{C}_w$]{
		\includegraphics[width=0.48\linewidth]{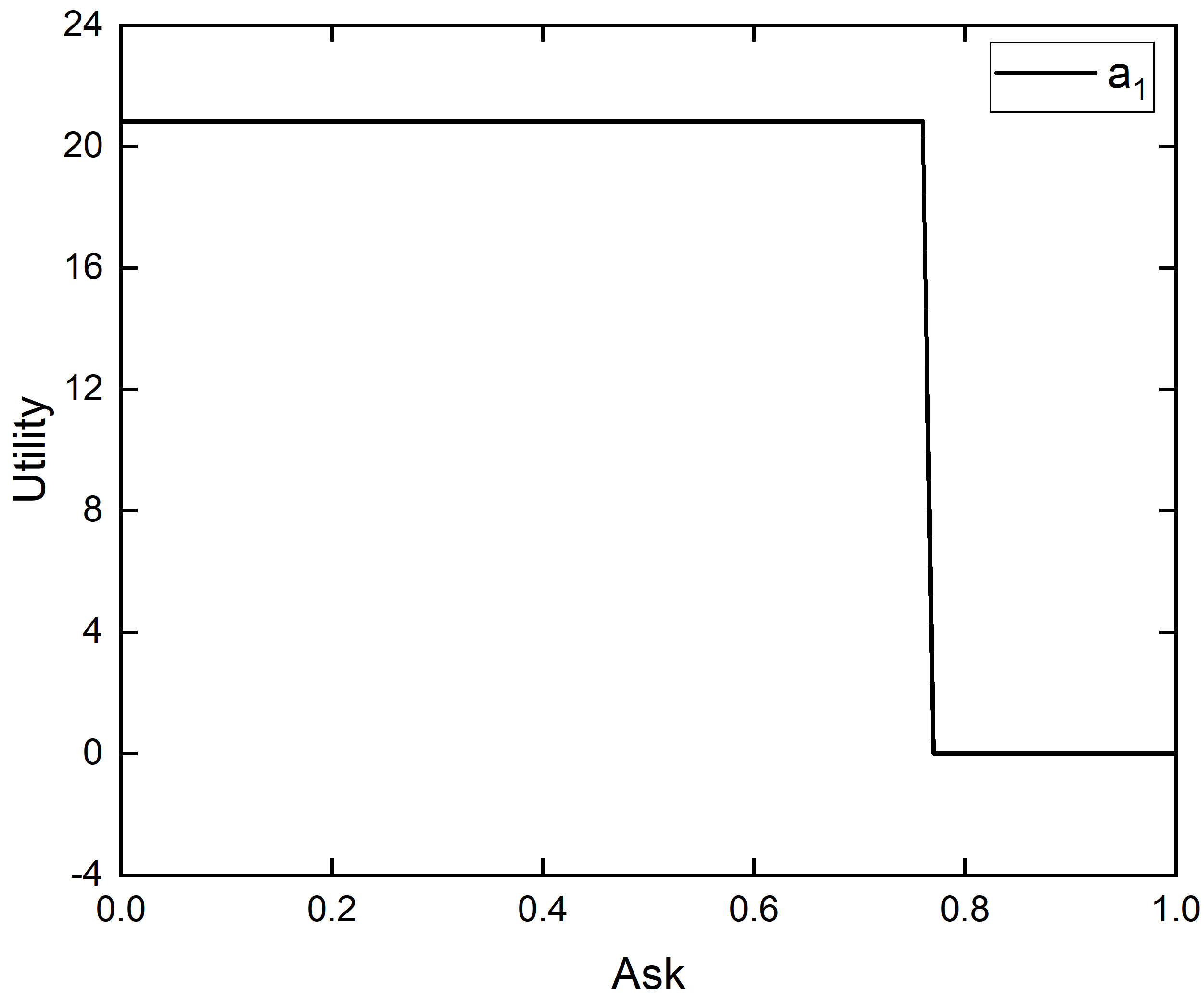}
	}%
	
	\subfigure[Buyer $V_{86}\notin\mathbb{V}_w$]{
		\includegraphics[width=0.48\linewidth]{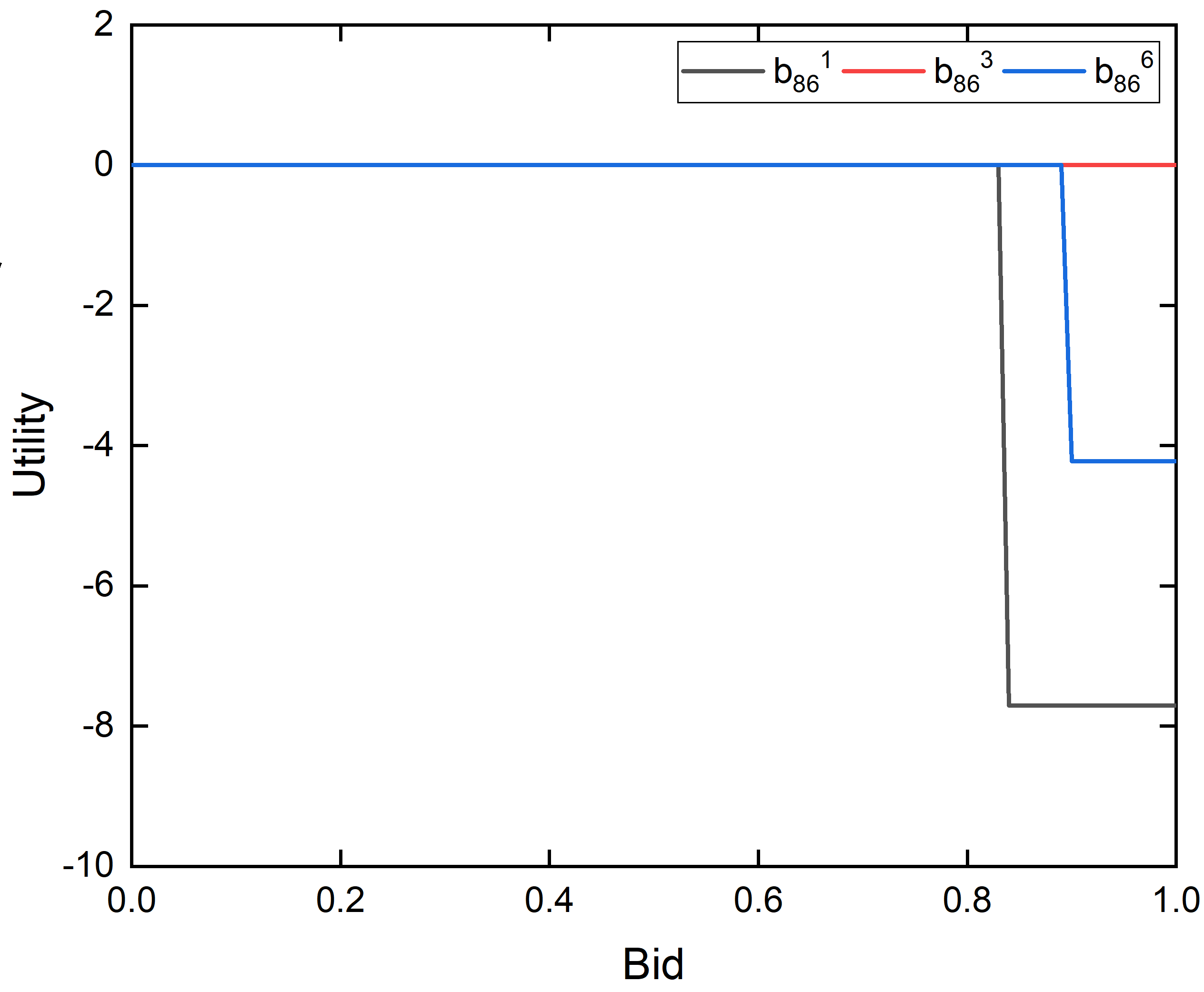}
	}%
	\subfigure[Seller $C_{3}\notin\mathbb{C}_w$]{
		\includegraphics[width=0.48\linewidth]{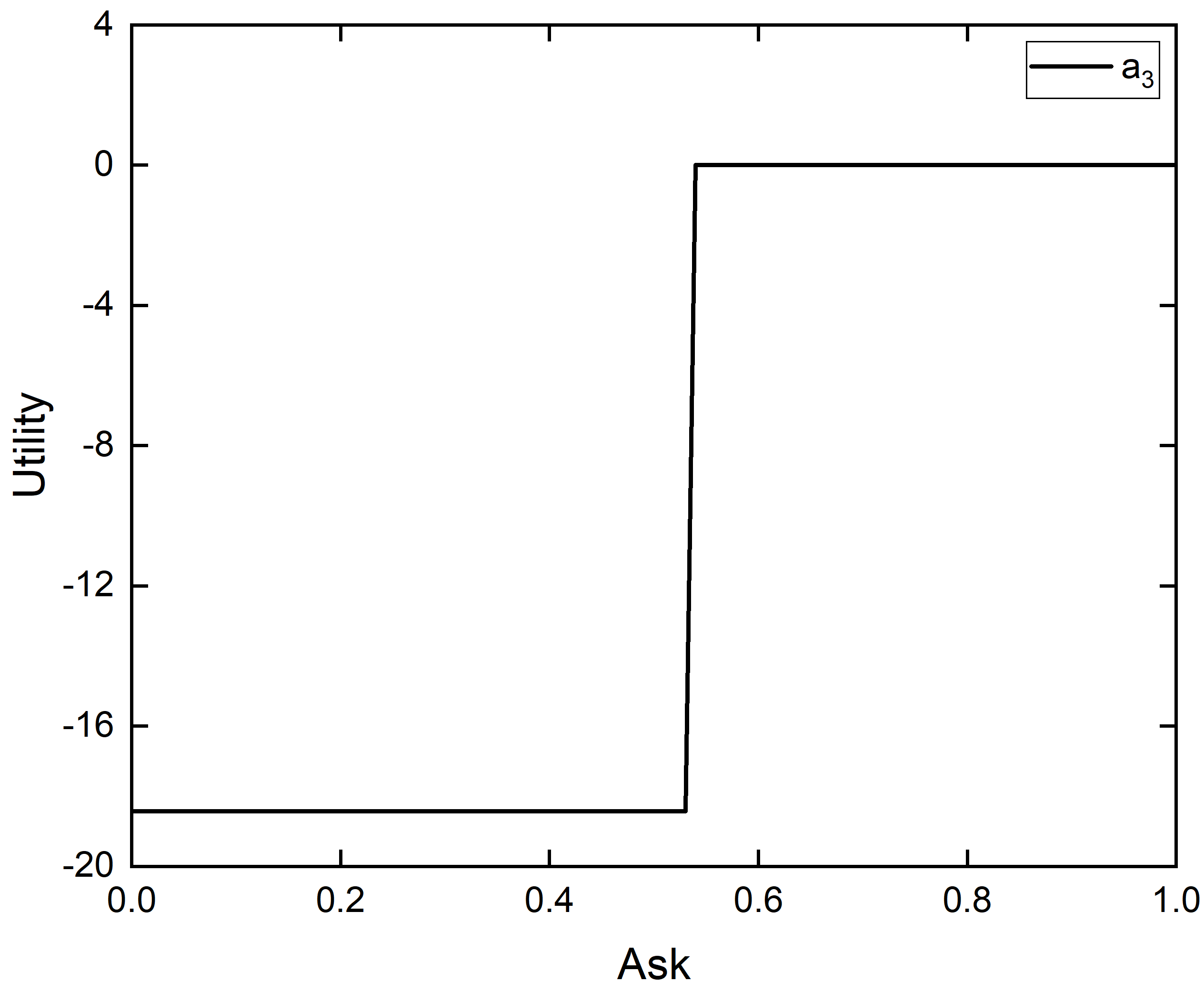}
	}%
	\centering
	\caption{The truthfulness of buyers and sellers in TMC.}
	\label{fig5}
\end{figure}

\begin{figure}[!t]
	\centering
	\subfigure[Buyer $V_{50}\in\mathbb{V}_w$]{
		\includegraphics[width=0.48\linewidth]{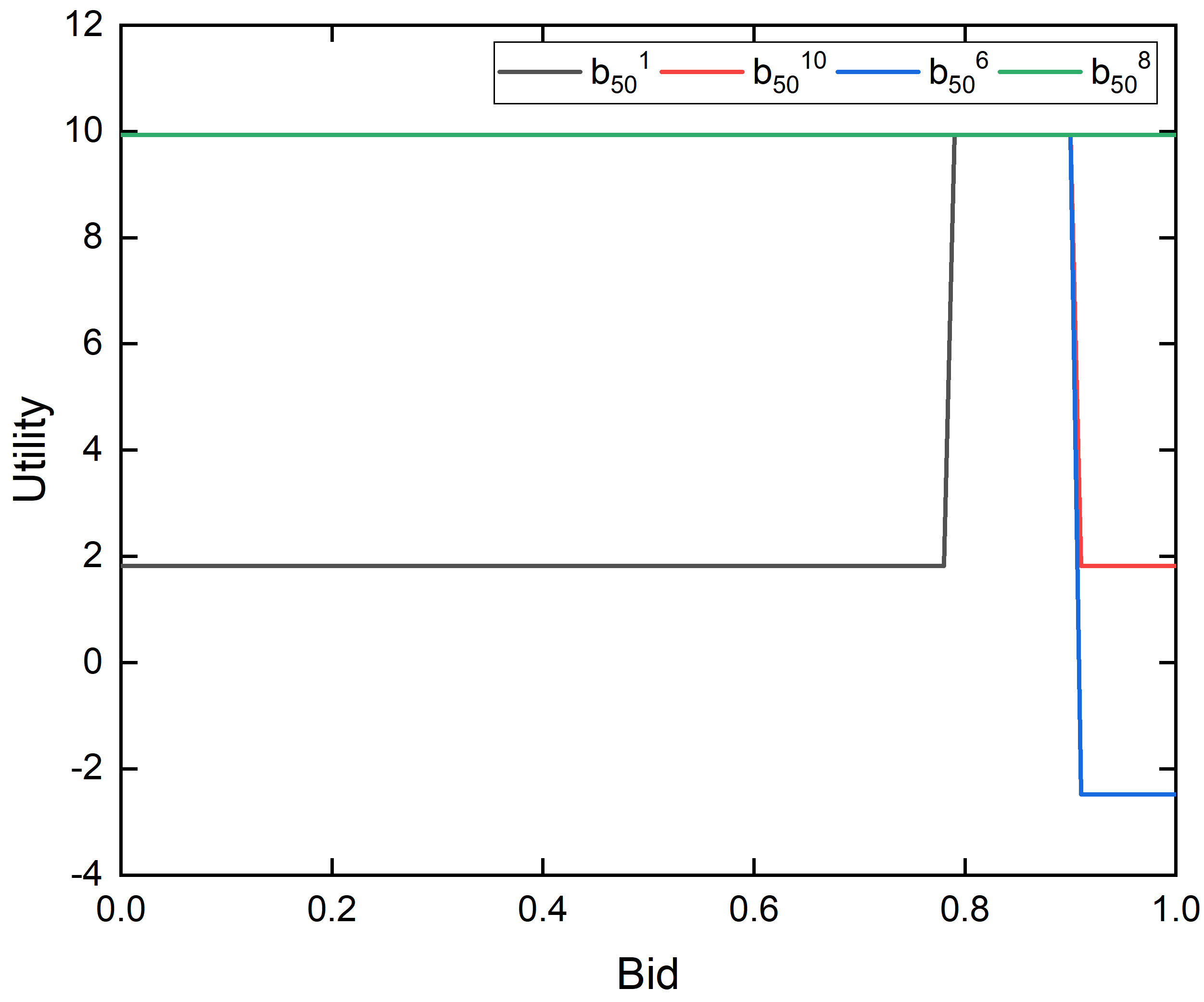}
	}%
	\subfigure[Seller $C_{1}\in\mathbb{C}_w$]{
		\includegraphics[width=0.48\linewidth]{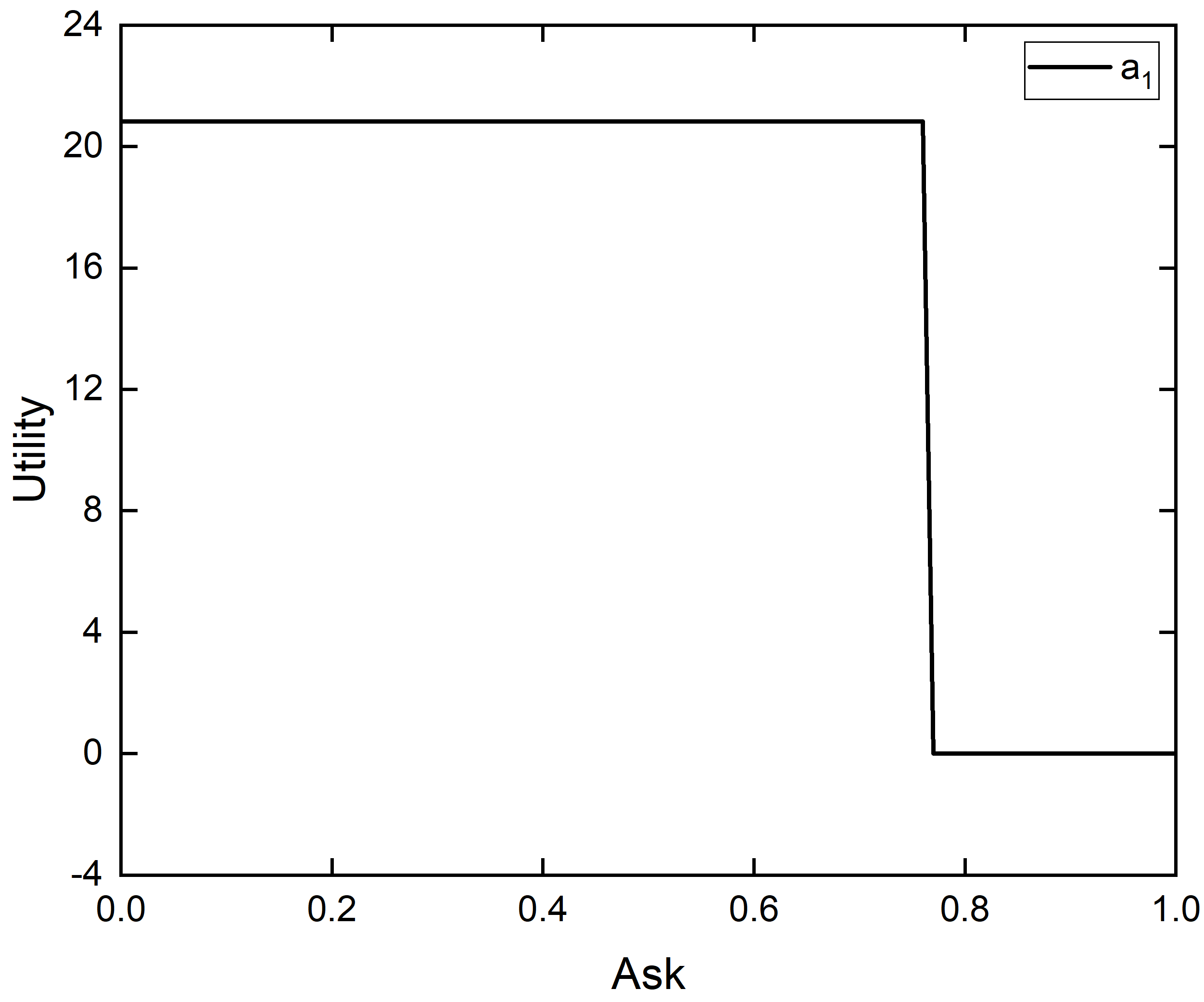}
	}%
	
	\subfigure[Buyer $V_{86}\notin\mathbb{V}_w$]{
		\includegraphics[width=0.48\linewidth]{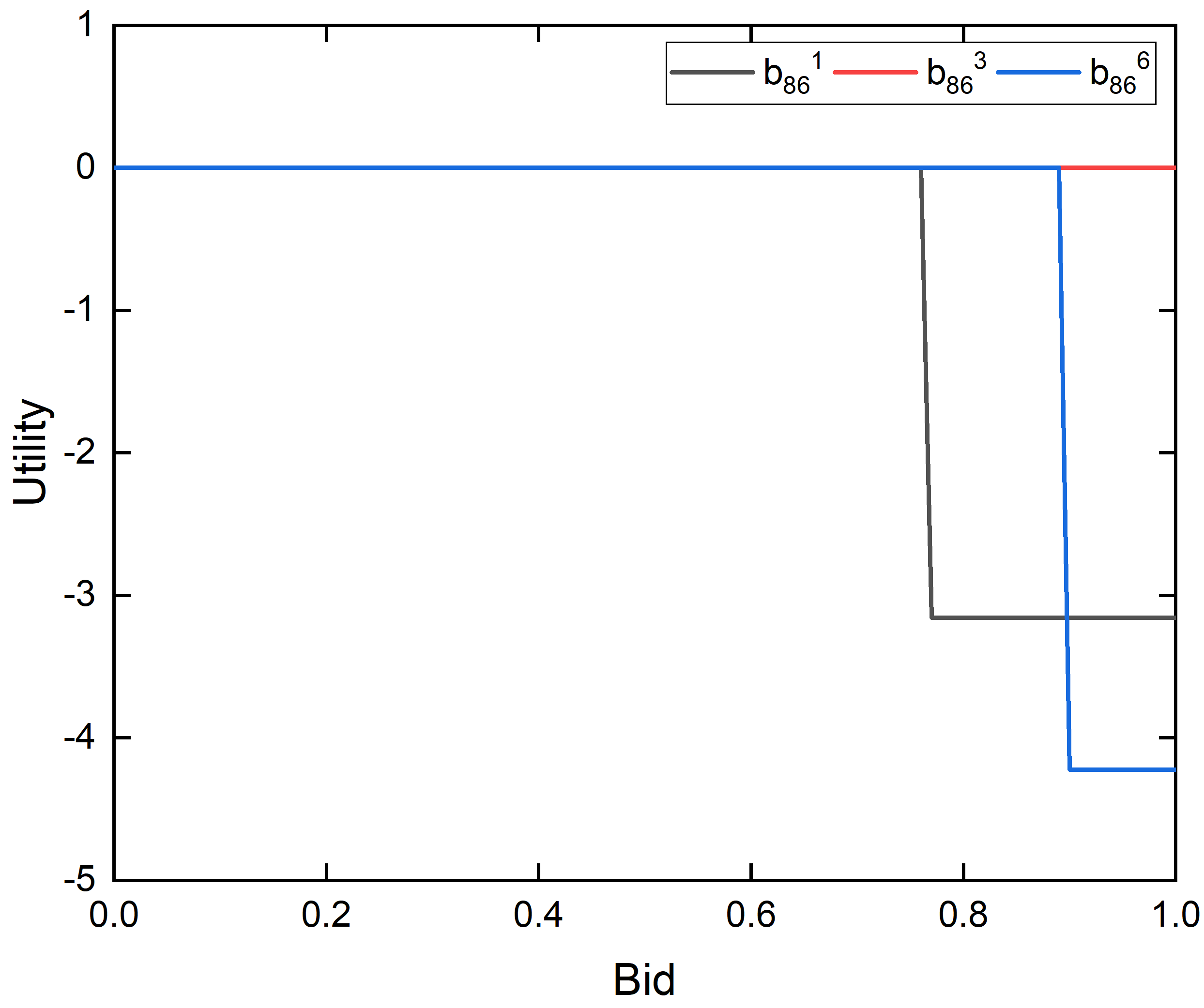}
	}%
	\subfigure[Seller $C_{3}\notin\mathbb{C}_w$]{
		\includegraphics[width=0.48\linewidth]{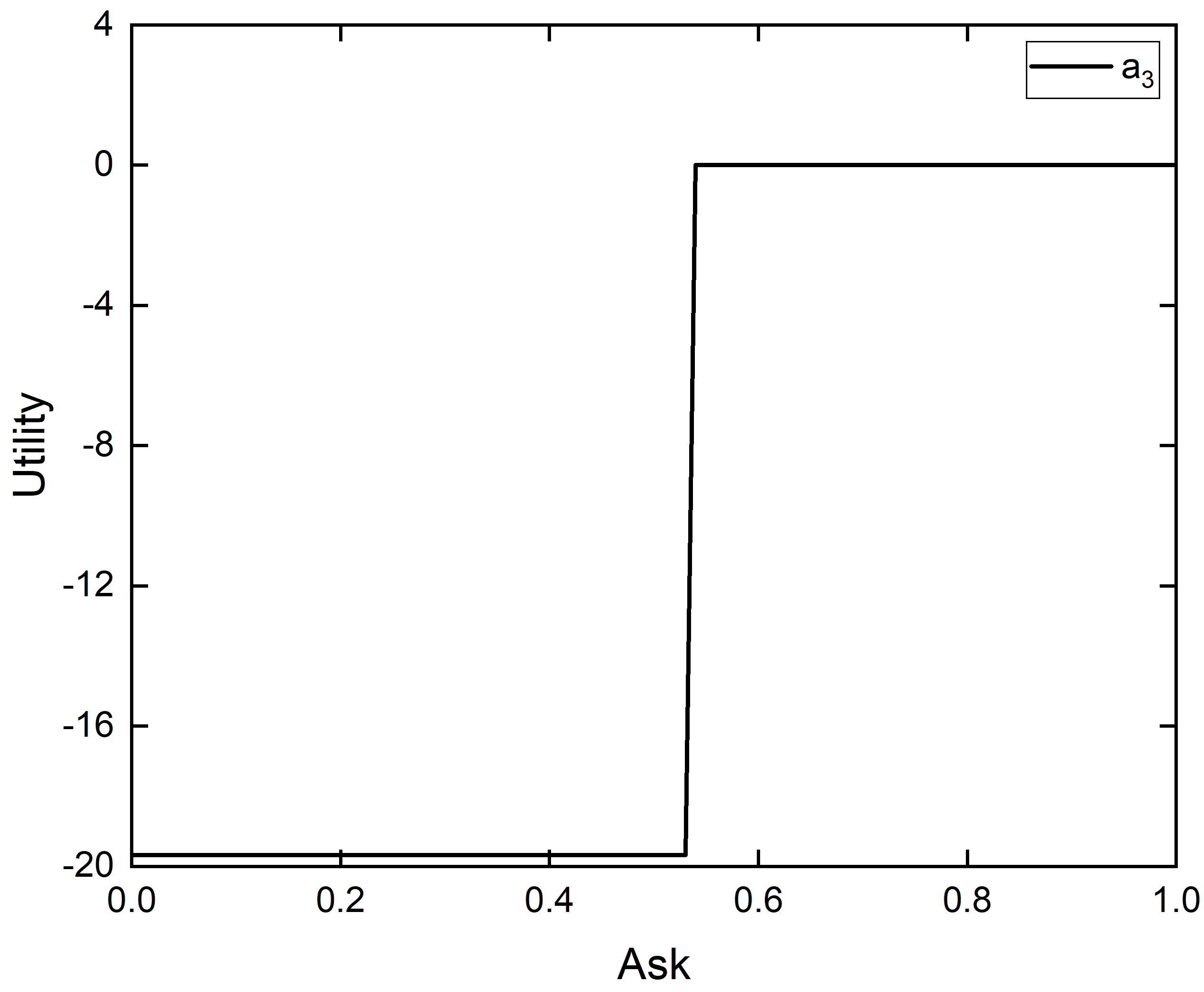}
	}%
	\centering
	\caption{The truthfulness of buyers and sellers in EMC.}
	\label{fig6}
\end{figure}

\textbf{Individual Rationality: }Fig. \ref{fig4} shows the assignment results and individual rationality obtained by TMC and EMC. The first line from the bottom is sellers (CSs) and the second line from the bottom is buyers (EVs). Take Fig. \ref{fig4} (a) as an example, for the seller $C_1$, there are two buyers, $V_{50}$ and $V_{58}$, assigned to it in TMC. For the mapping $\sigma(50)=1$, the payment rewarded to $C_1$ (red column) is more than the ask of $C_1$ (grey column) and the price charged to $V_{50}$ (green column) is less than the bid of $V_{50}$ (blue column). Then, for any mappings from $\mathbb{V}_w$ to $\mathbb{C}_w$ in TMC and EMC, the price charged to the winning buyer is not more than its bid and the payment rewarded to the winning seller is not less than its ask, thus individual rationality is held.

\textbf{Budget Balance: }According to the charged price and rewarded payment shown as Fig. \ref{fig4}, the total price charged to all winning buyers is not less than the total payment rewarded to all winning sellers. Furthermore, the number of buyers $|A_j|$ assigned to each seller $C_j\in\mathbb{C}_w$ is not more than the number of charging piles $k_j$, namely we have $|A_j|\leq k_j$. Thereby the budget balance is held in both TMC and EMC.

\begin{figure}[!t]
	\centering
	\includegraphics[width=2.5in]{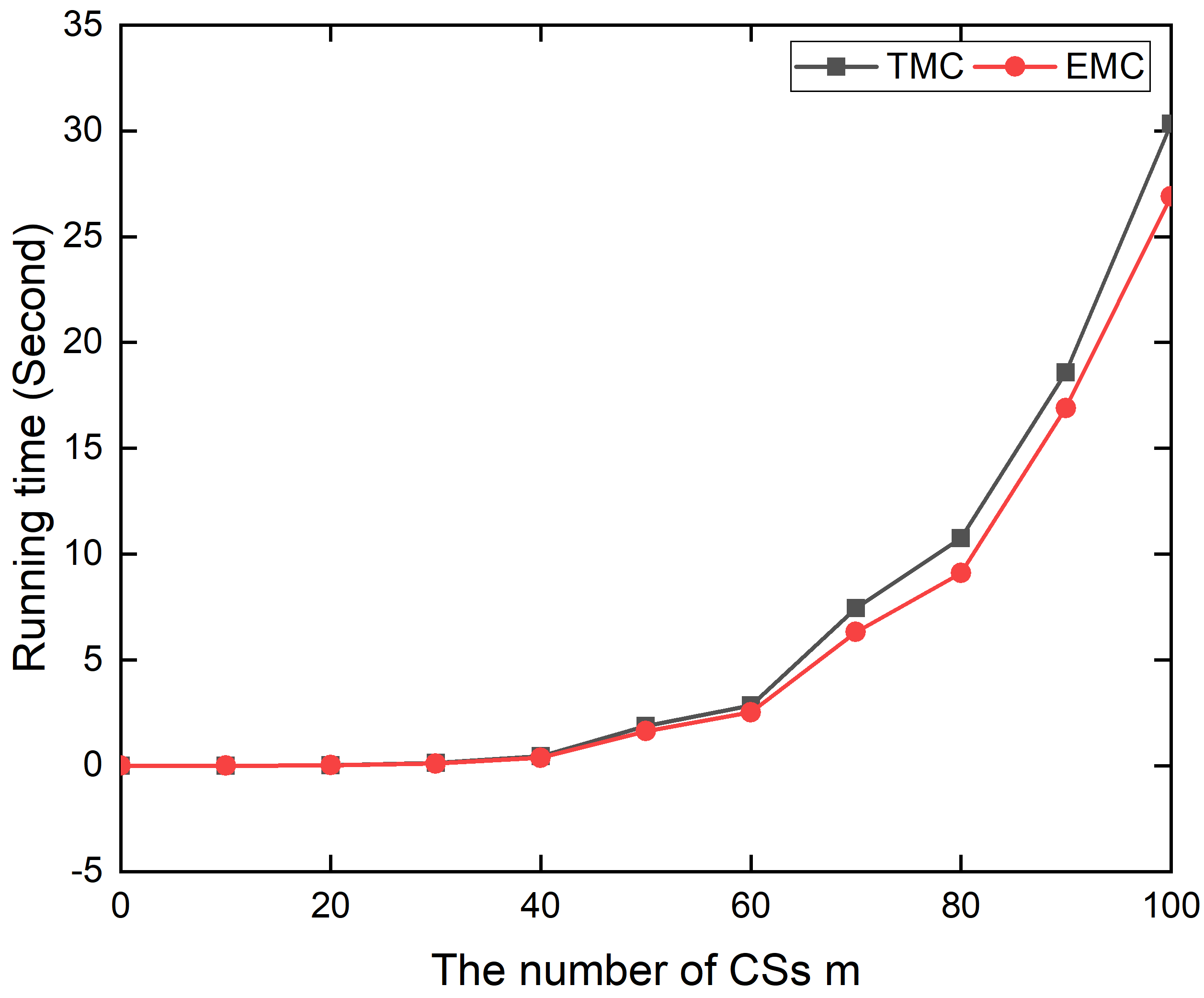}
	\centering
	\caption{The running time varies with the increasing number of CSs in TMC and EMC.}
	\label{fig7}
\end{figure}
\begin{figure}[!t]
	\centering
	\includegraphics[width=2.5in]{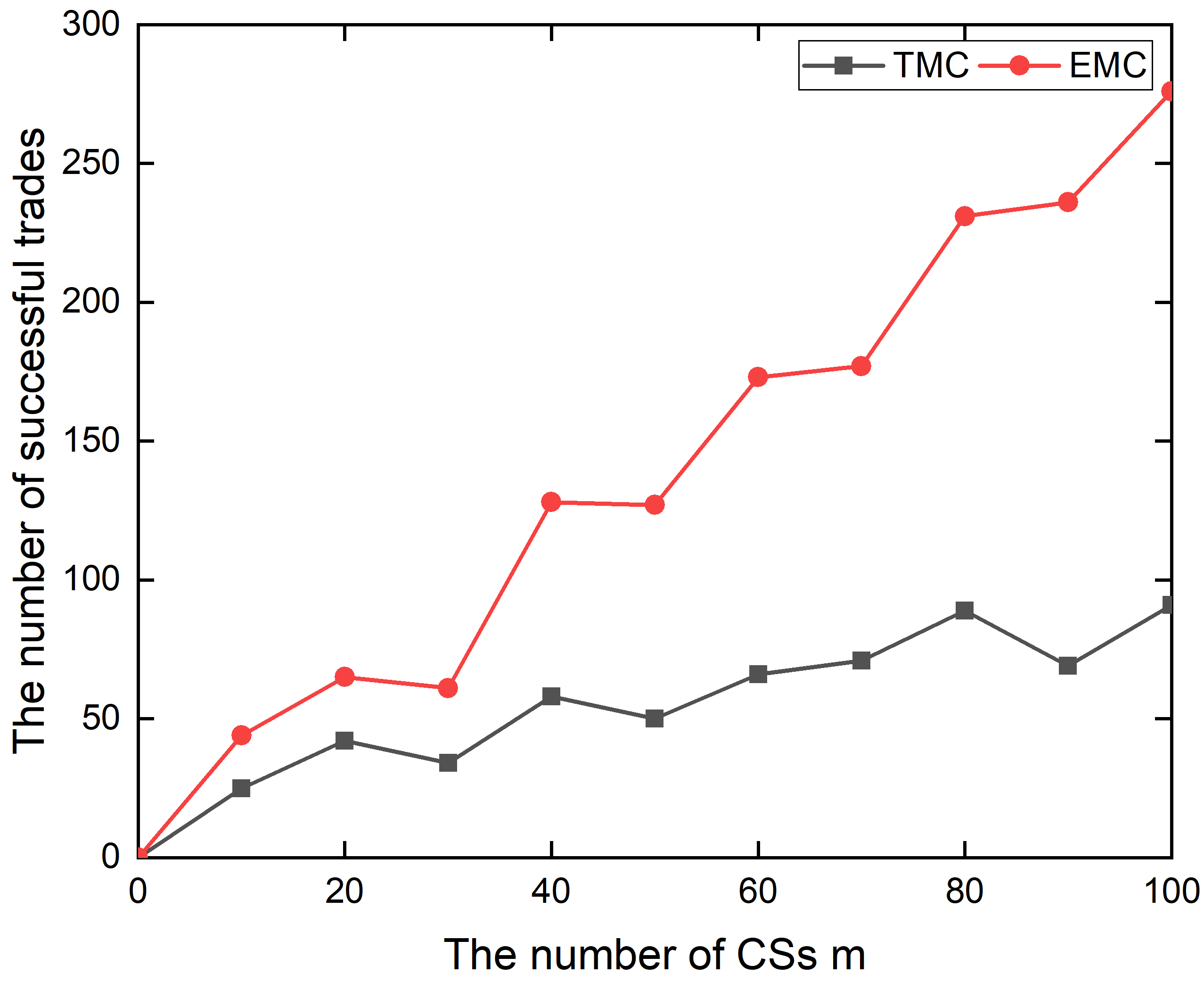}
	\centering
	\caption{The number of successful trades (winning buyers) varies with the increasing number of CSs in TMC and EMC.}
	\label{fig8}
\end{figure}

\textbf{Truthfulness: }We select a winning buyer $V_{50}\in\mathbb{V}_w$, a losing buyer $V_{86}\notin\mathbb{V}_w$, a winning seller $C_1\in\mathbb{C}_w$, and a losing seller $C_3\notin\mathbb{C}_w$ as the representatives to evaluate the truthfulness of buyers and sellers in our TMC and EMC. Fig. \ref{fig5} and Fig. \ref{fig6} show the truthfulness of buyers and sellers in TMC and EMC. Let us look at Fig. \ref{fig5} in TMC first. For the winning buyer $V_{50}$, $\sigma(50)=1$, it can get the maximum utility $\hat{u}_{50}=9.932$ when giving the truthful bid $b_{50}^1=v_{50}^1=0.905$. Here, we have $\mathbb{I}_{50}=\{C_1,C_{10}\}$ and its utility cannot be improved when changing the bids to the seller in $\mathbb{I}_{50}$. If the bid $b_{50}^1<0.77$, its utility will decrease to $1.819$ since the $V_{50}$ will not be selected in $\mathbb{H}_1$ and then be assigned to seller $C_{10}$. Besides, by changing the bids to sellers that are not in $\mathbb{V}_c$ $(C_8)$ or in $\mathbb{V}_c$ but not in $\mathbb{I}_{50}$ $(C_6)$, its utility cannot be improved as well. For the winning seller $C_1$, it can get the maximum utility $\bar{u}_1$ when giving the truthful ask $a_1=c_1=0.434$, which cannot be improved by changing its ask. For the losing buyer $V_{86}$, the utility $\hat{u}_{86}$ is impossible to be more than zero when bidding untruthfully. Its utility will be negative if increasing the bids to sellers in $\mathbb{V}_c$ $(C_3$ and $C_6)$. For the losing seller $C_{3}$, it achieves zero utility when giving the truthful ask $a_3=c_3=0.896$, which will be negative if decreasing its ask. Next, let us look at Fig. \ref{fig6} in EMC. We have the same observations in sub-figures shown as (b), (c), and (d). For the winning buyer $V_{50}$, $\sigma(50)=1$, it has a little different from that in TMC. If increasing the bids to other sellers in $\mathbb{V}_c$ $(C_6$ and $C_{10})$, its utility will decrease, even be negative. This is because the $V_{50}$ will be assigned to $C_6$ or $C_{10}$ instead of $C_1$ since total bids have been varied.

To evaluate the computational efficiency and system efficiency, we consider a smart area whose number of CSs $m$ ranges from $0$ to $200$. The parameters are sampled according to the rules described in the simulation setup.

\textbf{Computational Efficiency: }Fig. \ref{fig7} shows the running time comparison between TMC and EMC. We default by $n=10\cdot m$, thereby the time complexity $O(nm\log(nm))$ can be considered as $10\cdot m^2$ approximately. The trends shown in Fig. \ref{fig7} are in line with our expectations and they are computationally efficient. Besides, we can observe that the running time of EMC is slightly lower than that of TMC since there are more entities eliminated in advance.

\textbf{System Efficiency: }Here, the system efficiency can be characterized by the number of successful trades between buyers and sellers, which is equal to the number of winning buyers in $\mathbb{V}_w$ because each winning buyer will be assigned to a winning seller and then begin to trade. Fig. \ref{fig8} shows the system efficiency comparison between TMC and EMC. The system efficiency is not monotone since we sample the parameters used in this simulation at each number of sellers independently. Shown as Fig. \ref{fig8}, we can observe that the system efficiency in EMC is apparently better than that in TMC, which implies our proposed EMC is an effective approach to improve system efficiency even though it does not guarantee the truthfulness of buyers in some extreme cases. Next, the gap between TMC and EMC increases gradually as the number of sellers increases. This is because the buyer who bids higher can take over more tentative sets of candidate sellers in Algorithm \ref{a3}. However, it can be assigned to only one of these tentative sets, which causes a lot of waste and enlarge the gap.

\subsection{Further Discussion}
According to Lemma 7, we have known that truthfulness is not held for buyers in some extreme cases. A buyer cannot predict an untruthful bid to improve its utility in a deterministic manner because it does not know the bidding strategies of other buyers. For the buyers, it is very risky and difficult to increase their utilities by changing their bids. Our simulation result, shown as Fig. \ref{fig6}, also proves this point that EMC satisfies the truthfulness to some extent. Shown as Fig. \ref{fig7} and Fig. \ref{fig8}, EMC has a lower running time and a much better system efficiency than TMC. Therefore, we prefer to use our EMC instead of TMC in practical applications.

\section{Conclusion}
In this paper, a charging scheduling system based on blockchain technology and a constrained multi-item double auction model has been designed and implemented. To achieve privacy protection and scalability, we gave a lightweight charging scheduling framework based on asymmetric encryption and DAG-based blockchain. To incentivize EVs and CSs to participate in the market, we considered a constrained multi-item double auction model and designed two algorithms, TMC and EMC, that attempt to assign EVs (buyers) in this area to be charged in CSs (sellers). Both algorithms are feasible, which ensures individual rationality, budget balance, truthfulness, and computational efficiency. Here, EMC can get a better system efficiency than TMC, but it weakens the truthfulness of buyers to some extent. Finally, the results of numerical simulations indicated that our model is robust and theoretical analysis is correct.

\section*{Acknowledgment}

This work was supported in part by the National Natural Science Foundation of China (NSFC) under Grant No. 62202055 and No. 62202016, the Start-up Fund from Beijing Normal University under Grant No. 310432104, the Start-up Fund from BNU-HKBU United International College under Grant No. UICR0700018-22, the Project of Young Innovative Talents of Guangdong Education Department under Grant No. 2022KQNCX102, and the National Science Foundation (NSF) under Grant No. 1907472 and No. 1822985.

\ifCLASSOPTIONcaptionsoff
  \newpage
\fi

\bibliographystyle{IEEEtran}
\bibliography{references}

\begin{IEEEbiography}[{\includegraphics[width=1in,height=1.25in,clip,keepaspectratio]{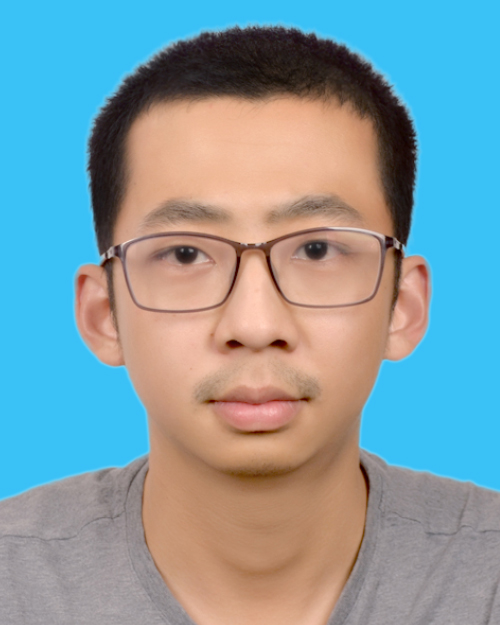}}]{Jianxiong Guo}
	received his Ph.D. degree from the Department of Computer Science, University of Texas at Dallas, Richardson, TX, USA, in 2021, and his B.E. degree from the School of Chemistry and Chemical Engineering, South China University of Technology, Guangzhou, China, in 2015. He is currently an Assistant Professor with the Advanced Institute of Natural Sciences, Beijing Normal University, and also with the Guangdong Key Lab of AI and Multi-Modal Data Processing, BNU-HKBU United International College, Zhuhai, China. He is a member of IEEE/ACM/CCF. He has published more than 40 peer-reviewed papers and been the reviewer for many famous international journals/conferences. His research interests include social networks, wireless sensor networks, combinatorial optimization, and machine learning.
\end{IEEEbiography}

\begin{IEEEbiography}[{\includegraphics[width=1in,height=1.25in,clip,keepaspectratio]{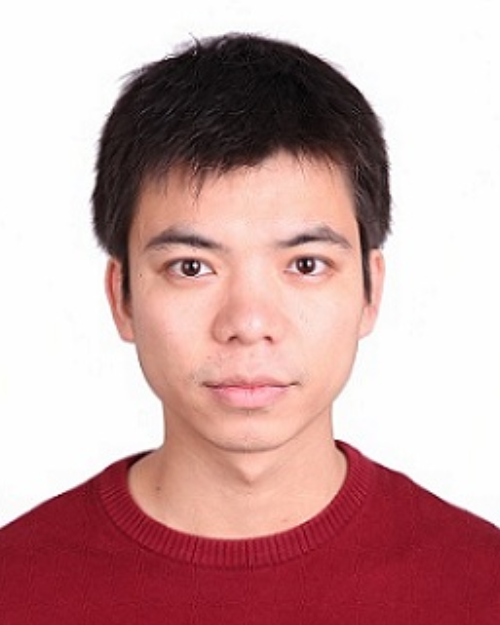}}]{Xingjian Ding}
	received his B.E. degree in electronic information engineering from Sichuan University in 2012 and M.S. degree in software engineering from Beijing Forestry University in 2017. He obtained his Ph.D. degree from the School of Information, Renmin University of China in 2021. He is currently an assistant professor at the School of Software Engineering, Beijing University of Technology. His research interests include wireless rechargeable sensor networks, approximation algorithms design and analysis, and blockchain.
\end{IEEEbiography}

\begin{IEEEbiography}[{\includegraphics[width=1in,height=1.25in,clip,keepaspectratio]{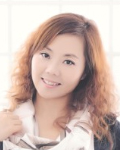}}]{Weili Wu}
	received the Ph.D. and M.S. degrees from the Department of Computer Science, University of Minnesota, Minneapolis, MN, USA, in 2002 and 1998, respectively. She is currently a Full Professor with the Department of Computer Science, The University of Texas at Dallas, Richardson, TX, USA. Her research mainly deals in the general research area of data communication and data management. Her research focuses on the design and analysis of algorithms for optimization problems that occur in wireless networking environments and various database systems.
\end{IEEEbiography}

\begin{IEEEbiography}[{\includegraphics[width=1in,height=1.25in,clip,keepaspectratio]{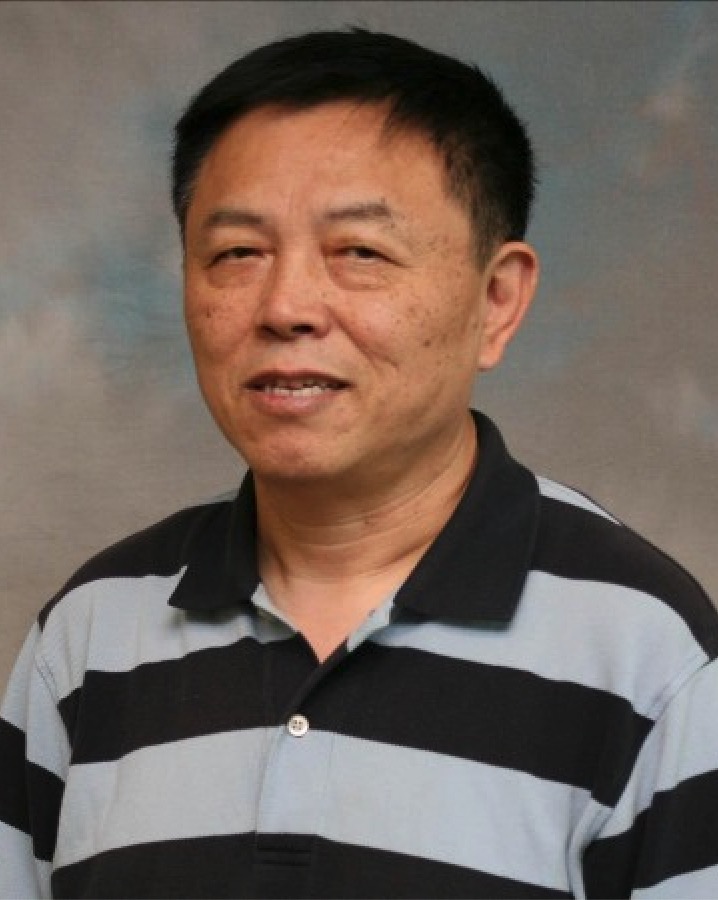}}]{Ding-Zhu Du}
	received the M.S. degree from the Chinese Academy of Sciences, Beijing, China, in 1982, and the Ph.D. degree from the University of California at Santa Barbara, Santa Barbara, CA, USA, in 1985, under the supervision of Prof. R. V. Book. Before settling at The University of Texas at Dallas, Richardson, TX, USA, he was a Professor with the Department of Computer Science and Engineering, University of Minnesota, Minneapolis, MN, USA. He was with the Mathematical Sciences Research Institute, Berkeley, CA, USA, for one year, with the Department of Mathematics, Massachusetts Institute of Technology, Cambridge, MA, USA, for one year, and with the Department of Computer Science, Princeton University, Princeton, NJ, USA, for one and a half years. Dr. Du is the Editor-in-Chief of the Journal of Combinatorial Optimization and is also on the editorial boards for several other journals.
\end{IEEEbiography}

\end{document}